\newtheorem{theorem}{Theorem}[section]
\newtheorem{lemma}{Lemma}[section]
\newtheorem{corollary}{Corollary}
\newtheorem{prop}{Proposition}
\newtheorem{definition}{Definition}[section]
\newtheorem*{proof}{Proof}
\begin{document}
	\title{On symmetric Tetranacci polynomials in mathematics and physics}
	\author{Nico G. Leumer\,
	} 
	\address{$^1$Donostia International Physics Center (DIPC), Paseo Manuel de Lardizabal 4, E-20018 San Sebastián, Spain}
	\address{$^2$ Université de Strasbourg, CNRS, Institut de Physique et Chimie des Matériaux de Strasbourg, UMR 7504, F-67000 Strasbourg, France}
	\address{$^3$ Institute for Theoretical Physics, University of Regensburg, 93053 Regensburg, Germany}
	
	\ead{nico.leumer@dipc.org}
	\begin{abstract}
	In this manuscript, we introduce (symmetric) Tetranacci polynomials $\xi_j$ as a twofold generalization of ordinary Tetranacci numbers, considering both non unity coefficients and generic initial values. We derive a complete closed form expression for any $\xi_j$ with the key feature of a decomposition in terms of generalized Fibonacci polynomials. For suitable conditions, $\xi_j$ can be understood as the superposition of standing waves. The issue of Tetranacci polynomials originated from their application in condensed matter physics. We explicitly demonstrate the approach for the spectrum, eigenvectors, Green's functions and transmission probability for an atomic tight binding chain exhibiting both nearest and next nearest neighbor processes. We demonstrate that in topological trivial models, complex wavevectors can form bulk states as a result of the open boundary conditions. We describe how effective next nearest neighbor bonding is engineered in state of the art theory/ experiment exploiting onsite degrees of freedom and close range hopping. We argue about experimental tune ability and on-demand complex wavevectors. 
	\end{abstract}
	\noindent{\it Keywords\/}: Tetranacci polynomials, Fibonacci decomposition, complex wavevectors, transcendental momentum quantization, bulk-boundary correspondence, (engineering) next nearest neighbor hopping.

	\submitto{\jpa}
	\maketitle
	\section{Introduction}
	Undoubtedly one of the most famous sequences are the Fibonacci numbers $f_n$, defined recursively by $f_{n+1} = f_n + f_{n-1}$ for $n\ge 1$ and $f_0 = 0, f_1 = 1$ \cite{Vajda2008, Hoggatt1969, Horadam1961, Anđelić2020}. As noticed by Horadam in the midst of the last century, generalizations require either altered initial values or alternatively a modified recursion formula \cite{Horadam1961}. For instance, Webb and Parberry did the former and studied Fibonacci polynomials $F_n$ obeying ($n\ge 1$) $F_{n+1} = x\,F_n + F_{n-1}$, $F_0 = 0,\,F_1 = 1$ \cite{Webb1969}. Half a decade later, Hoggatt Jr. and Long defined generalized Fibonacci polynomials $\mathcal{F}_n$ \cite{Hoggatt1974} as ($n\ge 1$)
	\begin{eqnarray}\label{equation: generalized Fibonacci polynomials}
		\mathcal{F}_{n+1} = x\,\mathcal{F}_n +y\, \mathcal{F}_{n-1},\quad \mathcal{F}_0 = 0,\,\mathcal{F}_1=1,
	\end{eqnarray}
	while {\"O}zvatan and Pashaev substituted $\mathcal{F}_{0,1}$ by generic initial values $G_{0,1}$ \cite{Ozvatan2017}.
	
	Extending the recursion range in \eref{equation: generalized Fibonacci polynomials} from two to three yields Tribonacci numbers or Tribonacci polynomials depending on the coefficients and supposing properly chosen initial values \cite{Feinberg1963, Jishe2011, Bicknell1973, Hoggatt1973, Waddill1967}. Subsequently, the first notion of Tetranacci numbers, where the next element of the sequence is formed by the previous four, appeared (to our best knowledge) in \cite{Feinberg1963}. Since then, Tetranacci or Tetranacci-like sequences were studied in many variations up to modern days, cf. \cite{Tasci2017, Soykan2020, Waddill1992, Singh2014, Hathiwala2019} in order to mention only a few. The most generic form of what we call hereinafter a \emph{Tetranacci polynomials} $t_n$ is ($n\ge 0$)
	\begin{eqnarray}\label{equation: Tetranacci numbers}
		t_{n+2}  = x_1\,t_{n+1}+ x_0\,t_{n} + x_{-1}\,t_{n-1} + x_{-2}\,t_{n-2}
	\end{eqnarray}
	with some initial values $t_{-2},\ldots,\,t_1$ and given coefficients $x_1,\ldots,\,x_{-2}$ was previously presented in \cite{Soykan2020}. 
	
	In contrast to the generic case, we focus on symmetric Tetranacci polynomials (cf. \eref{equation: Tetranacci recursion formula} below) recovered from \eref{equation: Tetranacci numbers} for $x_{-2} = -1$ and $x_1 = x_{-1}$ but still generic $x_{0,1}$. This particular choice of coefficients seems arbitrary; it is not. Rather, the eigenvectors of ($a,\,b,\,c,\, \alpha,\, \beta\in \mathbb{R}$)
	\begin{eqnarray}\label{equation: Toeplitz like matrix}
		M(\alpha, \beta) = \left[\matrix{%
			c-\alpha & b & a\cr	
			b & c  & b & a\cr	
			a & b & c & b & a\cr	
			&	\ddots &	\ddots &	\ddots &	\ddots &	\ddots \cr
			&&a & b & c & b & a\cr	
			&	&&a & b & c & b \cr	
			&&	&&a & b & c-\beta &
		}ß\hspace{-0.3cm}\right]_{N\times N}
	\end{eqnarray}
	symmetric Toeplitz matrices $	M(\alpha =0, \beta=0)$ are composed of symmetric Tetranacci polynomials as we shall discuss. Finite $\alpha$ or $\beta$ merely modifies the boundary constraints of eigenvectors from the pure Toeplitz case \cite{Leumer2020}.
	
	Besides mathematical interests, symmetric Tetranacci appear also in condensed matter theory. The reason for this is, that we physicists consider most often particular hermitian systems to which we refer as being "translation invariant". Thus, the model's physics is captured by (banded) Toeplitz matrices. For $\alpha = \beta = 0$, $M$ mimics an atomic chain with nearest and next nearest neighbor hopping, while specific $\alpha\neq 0$, $\beta \neq 0$ are connected to the $X-Y$ chain in transverse magnetic field \cite{Loginov1997} or the Kitaev chain \cite{Leumer2020, Loginov1997, Kitaev-2001}. It has been shown further, that symmetric Tetranacci polynomials are linked to characteristic polynomial and the Green's functions used in quantum transport \cite{Leumer2021}. 
		
	The issue of eigenvalues of banded Toeplitz was investigated formerly in more generality \cite{Trench1985}. Particularly, the spectrum of tridiagonal Toeplitz attracted some interest \cite{Kouachi2008, Kouachi2006, Yueh2005} as their the eigenvector elements are Chebyshev \cite{daFonseca2005, daFonseca2019, Gover1994} or Fibonacci polynomials \cite{Shin1997}. Although a matrix as in \eref{equation: Toeplitz like matrix} can be generated by a product of two tridiagonal Toeplitz matrices \cite{Leumer2020}, results for the spectrum of $M$ are more complicated w.r.t. to the tridiagonal case as we shall see.
	
	One of our main contributions is to present a simple and closed form expression for generic symmetric Tetranacci polynomials, which originates from a decomposition into generalized Fibonacci polynomials (cf. \sref{section: Fibonacci decomposition}). As motivation, the fundamental conviction of physicists is that eigenstates (eigenvectors) of finite systems are given in terms of standing waves. Their form is sinusoidal and the perhaps most evident example are oscillations of a guitar string, being fixed at both ends. By the knowledge that Binet/ Binet-like forms of Fibonacci polynomials can be reshaped into a sine function \cite{Webb1969, Hoggatt1974}, the stage was set. 
	
	The manuscript is organized as follows. In \sref{section: introduction basic tetranacci}, we formally define symmetric Tetranacci polynomials and present the basic strategy to find their closed form expression. Subsequently, we introduce so called \emph{basic} Tetranacci polynomials and discuss a few of their properties. In \sref{section: Fibonacci decomposition}, we demonstrate that specific generalized Fibonacci polynomials obey also the Tetranacci recursion formula. We verify that any generic Tetranacci polynomial can be expressed in terms of those specific solutions. Then, we focus on applications in quantum physics. In \sref{section: lattice model}, we discuss the spectrum and the eigenvectors of an atomic chain owning nearest and next nearest neighbor hopping granting \eref{equation: Toeplitz like matrix} for $\alpha = \beta = 0$. Since some feature of this model require the next nearest neighbor process to dominate the close-range coupling, \sref{section: Engineering effective next nearest neighbor coupling} discusses how arbitrarily large ratios of effective couplings can be engineered experimentally. We conclude in \sref{section: conclusion}. \ref{appendix: degenerated roots} (\ref{appendix: Formulae of T-1,.., T_1 for degenerate roots}) discusses (states closed formulae for) degenerate roots of the Tetranacci recursion. Finally, \ref{appendix: quatum transport} presents the application to quantum transport. Readers mainly interested in the application may not hesitate to proceed from \sref{section: lattice model} on.
	\section{Generic properties of symmetric Tetranacci polynomials}\label{section: introduction basic tetranacci}
	\begin{definition}\label{def.: Tetranacci polynomial}
		The symmetric Tetranacci polynomial $\xi_j$ is recursively defined by 
		\begin{eqnarray}\label{equation: Tetranacci recursion formula}
			\xi_{j+2}\,=\,\zeta\,\xi_j-\xi_{j-2}\,+\,\eta\left(\xi_{j+1}+\xi_{j-1}\right), \quad j \in \mathbb{Z}
		\end{eqnarray}
		in terms of its initial values $\xi_i= g_i (\zeta,\,\eta) \in \mathbb{C}$ ($i=-2,\,\ldots,\,1$) and complex coefficients $\zeta,\,\eta$.
	\end{definition}
	Although the initial values may or may not depend themselves on $\zeta$ and/ or $\eta$, we utilize always the shorthand notation of $g_{-2},\ldots,\,g_{1}$ and $\xi_j$ respectively. For the purpose of illustration, the first few terms of the sequence read
	\numparts
	\begin{eqnarray}\label{equation: xi2}
		\xi_2&=-g_{-2}+\eta\,g_{-1}+\zeta\, g_0+\eta\,g_1,\\
		\xi_3\,&=\,-\eta  \,g_{-2}+\left(\eta^2-1\right)\,g_{-1}+\eta\left(\zeta+1\right)\, g_0+\left(\eta^2+\zeta\right)\,g_1,\\		
		\xi_4&=-\left(\eta^2+\zeta\right)\, g_{-2}+ \eta\left(\eta^2+\zeta-1\right)\,g_{-1}+\left(\zeta + 1\right)\left(\zeta-1 +\eta^2\right)\, g_0\nonumber\\ \label{equation: xi4}
		&\qquad+ \eta\left(\eta^2 +2\zeta+1\right)\,g_1
	\end{eqnarray}
	\endnumparts
	and further ones follow from \eref{equation: Tetranacci recursion formula}. Alternatively, we may also rely on the generating function $E(t)$. 
	\begin{prop} The generating function $E(t) = \sum\limits_{k=0}^\infty \xi_k\, t^k$ of symmetric Tetranacci polynomials reads
		\begin{eqnarray}\label{equation: generating function}
			E(t) = \frac{g_1\,t\,+\,g_0\,(1-\eta\,t)\,+\,g_{-1}\,(\eta \,t^2-t^3)-g_{-2}\,t^2}{1-\eta\,t-\zeta\,t^2-\eta\,t^3+t^4}.
		\end{eqnarray}
	\end{prop}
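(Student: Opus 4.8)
The plan is to work in the ring of formal power series $\mathbb{C}[[t]]$ (equivalently, for $|t|$ below the reciprocal of the largest root modulus of the quartic $1-\eta t-\zeta t^2-\eta t^3+t^4$, as a convergent power series) and to turn the recursion \eref{equation: Tetranacci recursion formula} into a linear functional equation for $E(t)$.

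First I would multiply \eref{equation: Tetranacci recursion formula} by $t^{j+2}$ and sum over $j\ge 0$; this is legitimate precisely because the recursion is postulated to hold for every $j\in\mathbb{Z}$, in particular at $j=0$ and $j=1$. Each of the five resulting series is a shifted copy of $E(t)$ up to finitely many boundary terms. The forward shifts give $\sum_{j\ge 0}\xi_{j+2}t^{j+2}=E(t)-g_0-g_1 t$ and $\sum_{j\ge 0}\xi_{j+1}t^{j+2}=t\bigl(E(t)-g_0\bigr)$, and the unshifted term gives $\zeta t^2 E(t)$. The only delicate point is that the backward shifts reach into the negative-index range, where $\xi_{-1}=g_{-1}$ and $\xi_{-2}=g_{-2}$ by definition; re-indexing carefully yields $\sum_{j\ge 0}\xi_{j-1}t^{j+2}=g_{-1}t^2+t^3 E(t)$ and $\sum_{j\ge 0}\xi_{j-2}t^{j+2}=g_{-2}t^2+g_{-1}t^3+t^4 E(t)$. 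It is exactly these seed terms that generate the $g_{-1}$- and $g_{-2}$-dependence of the numerator in \eref{equation: generating function}.

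Next I would collect the terms proportional to $E(t)$: their coefficient is $1-\eta t-\zeta t^2-\eta t^3+t^4$, which is the claimed denominator; since it has constant term $1$ it is a unit in $\mathbb{C}[[t]]$, so the functional equation determines $E(t)$ uniquely. Moving everything else to the right-hand side and substituting $\xi_0=g_0$, $\xi_1=g_1$ leaves $g_0+g_1 t-\eta g_0 t-g_{-2}t^2+\eta g_{-1}t^2-g_{-1}t^3$; regrouping by initial value reproduces the numerator $g_1 t+g_0(1-\eta t)+g_{-1}(\eta t^2-t^3)-g_{-2}t^2$ of \eref{equation: generating function}, and dividing finishes the proof.

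There is no genuine obstacle here beyond careful index bookkeeping for the backward-shifted sums; the one conceptual subtlety worth stressing is that \eref{equation: Tetranacci recursion formula} is imposed already at $j=0$ and $j=1$, so that $\xi_2$ and $\xi_3$ are not free parameters but the fixed combinations of the $g_i$ displayed in \eref{equation: xi2}. A useful consistency check: if one instead sums only over $j\ge 2$ and inserts the explicit forms of $\xi_2,\xi_3$ from \eref{equation: xi2}, the same numerator emerges after the $\zeta$- and $\eta$-terms cancel; and expanding \eref{equation: generating function} to order $t^4$ must reproduce $\xi_2,\xi_3,\xi_4$ as given in \eref{equation: xi2}--\eref{equation: xi4}.
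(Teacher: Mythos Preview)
Your proposal is correct and follows essentially the same route as the paper's own proof: both multiply the recursion \eref{equation: Tetranacci recursion formula} through by the appropriate power of $t$, sum over $j\ge 0$, and then complete each shifted series to $E(t)$ plus the boundary terms involving $g_{-2},\ldots,g_1$. Your explicit bookkeeping of the backward-shifted sums and your remark that the denominator is a unit in $\mathbb{C}[[t]]$ are welcome clarifications, but the argument is the same.
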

	\begin{proof} Using the definition of the generating function grants
		\begin{eqnarray}
		\fl	E(t) &= g_0\,+\,g_1\,t+t^2\,\sum\limits_{n=0}^\infty \xi_{n+2}\, t^n\nonumber\\\fl
			&= g_0\,+\,g_1\,t+t^2\,\sum\limits_{n=0}^\infty \left[\zeta\,\xi_{n}-\xi_{n-2}+\eta\left(\xi_{n+1}+ \xi_{n-1}\right)\right]\, t^n\nonumber\\\fl
			&= g_0\,+\,g_1\,t+g_{-1}t^2\,(\eta-t)-g_{-2}\,t^2-g_0\,\eta t   +E(t)\left(\zeta t^2-t^4+\eta t^3+\eta\,t\right).
		\end{eqnarray}
		Here, we substituted \eref{equation: Tetranacci recursion formula} in the last term and all sums were completed properly in order to provide $E(t)$. \hfill $\square$
	\end{proof}
	Perhaps contrary to their appearance in \eref{equation: xi2}- \eref{equation: xi4}, the looked for closed form of $\xi_j$ is rather simple. Since the intention of our mindset is the applicability, we aim at a particular expression for $\xi_j$, namely \eref{equation: final closed form for tetranaccis} below, demanding the introduction of specific solutions to \eref{equation: Tetranacci recursion formula},  hereinafter referred to as the basic Tetranacci polynomials. 
	\begin{definition}\label{definition:  the basic tetranacci polynomials} The basic Tetranacci polynomials $\mathcal{T}_i(j)$ ($i=-2,\ldots,\,1$) satisfy \eref{equation: Tetranacci recursion formula} for generic $j\in \mathbb{Z}$. Their initial values are summarized by
		\begin{eqnarray}\label{equation: selective property of the basic tetranacci polynomials}
			\mathcal{T}_i(j)\,=\,\delta_{ij},\quad i,j =-2,\ldots,\,1.
		\end{eqnarray}
		and $\delta_{ij}$ denotes the Kronecker-Delta\footnote{The Kronecker-Delta is defined as $\delta_{nl}=1$ for $n=l$ and zero otherwise.}. We call \eref{equation: selective property of the basic tetranacci polynomials} the selective property of $\mathcal{T}_i(j)$ as becomes evident hereinafter.
	\end{definition}
	For the purpose of illustration, \Tref{table: first few values of the Basic Tetranacci polynomials} presents the first few terms of $\mathcal{T}_i(j)$. Additional ones can be anticipated from \eref{equation: xi2}-\eref{equation: xi4}. The primary advantage of the basic Tetranacci polynomial resides in the fact that the arbitrary initial values of $\xi_j$ and the recursion formula \eref{equation: Tetranacci recursion formula} separate by means of the selective property \eref{equation: selective property of the basic tetranacci polynomials}. A similar strategy was pursued in \cite{Singh2014} for Tetranacci numbers. Nevertheless, we then have \emph{initially} to deal with four symmetric Tetranacci polynomials rather than only one.
	\begin{corollary}\label{corollary: closed form for xij}
		Any symmetric Tetranacci polynomial $\xi_j$ can be written as ($j\in \mathbb{Z}$)
		\begin{eqnarray}\label{equation: final closed form for tetranaccis}
			\xi_j\,=\,\sum\limits_{i=-2}^1\,g_i\,\mathcal{T}_i(j)
		\end{eqnarray}
		for generic $\eta$, $\zeta\in \mathbb{C}$ and complex initial values $\xi_i=g_i$, $i=-2,\ldots,\,1$.
	\end{corollary}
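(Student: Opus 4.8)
The plan is to prove Corollary \ref{corollary: closed form for xij} by a direct verification argument combined with a uniqueness observation. First I would define the candidate sequence $\tilde\xi_j := \sum_{i=-2}^1 g_i\,\mathcal{T}_i(j)$ for all $j\in\mathbb{Z}$, and show it coincides with $\xi_j$. This is a standard "linear recursion = linear space of dimension four" argument: the symmetric Tetranacci recursion \eref{equation: Tetranacci recursion formula} can be solved forward and backward (the coefficient of $\xi_{j+2}$ and of $\xi_{j-2}$ are both $\pm 1$, hence invertible), so any solution over $\mathbb{Z}$ is uniquely determined by the four consecutive values $(\xi_{-2},\xi_{-1},\xi_0,\xi_1)$.

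The two key steps are then as follows. Step one: $\tilde\xi_j$ satisfies the Tetranacci recursion. This is immediate from linearity — each $\mathcal{T}_i(j)$ satisfies \eref{equation: Tetranacci recursion formula} by Definition \ref{definition:  the basic tetranacci polynomials}, and \eref{equation: Tetranacci recursion formula} is a homogeneous linear relation in the sequence, so any $\mathbb{C}$-linear combination of solutions is again a solution. Step two: $\tilde\xi_j$ has the correct initial values. Evaluating at $j=-2,\dots,1$ and using the selective property \eref{equation: selective property of the basic tetranacci polynomials}, $\mathcal{T}_i(j)=\delta_{ij}$, we get $\tilde\xi_j = \sum_{i=-2}^1 g_i\,\delta_{ij} = g_j = \xi_j$ for each $j\in\{-2,-1,0,1\}$. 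Having matched both the recursion and the four seed values, the uniqueness of solutions of \eref{equation: Tetranacci recursion formula} forces $\tilde\xi_j = \xi_j$ for all $j\in\mathbb{Z}$, which is \eref{equation: final closed form for tetranaccis}.

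There is no real obstacle here; the only point that deserves an explicit sentence is why the recursion pins down a solution on all of $\mathbb{Z}$ rather than just on $\{j\ge -2\}$. I would note that solving \eref{equation: Tetranacci recursion formula} for $\xi_{j-2}$ gives $\xi_{j-2} = \zeta\,\xi_j - \xi_{j+2} + \eta(\xi_{j+1}+\xi_{j-1})$, so a block of four consecutive terms determines the next term in either direction by induction; hence the space of $\mathbb{Z}$-indexed solutions is four-dimensional with coordinates given by $(\xi_{-2},\xi_{-1},\xi_0,\xi_1)$, and two solutions agreeing on these four indices agree everywhere. (Equivalently, one may invoke the generating function \eref{equation: generating function}, which depends linearly on $g_{-2},\dots,g_1$, to read off the $j\ge 0$ part, and then extend to negative $j$ by the backward recursion; but the direct uniqueness argument is cleaner and covers all $j\in\mathbb{Z}$ at once.) This completes the proof.
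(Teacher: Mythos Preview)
Your proof is correct and follows essentially the same approach as the paper: show that the right-hand side is a solution of the linear recursion \eref{equation: Tetranacci recursion formula} by linearity, verify the initial values via the selective property \eref{equation: selective property of the basic tetranacci polynomials}, and conclude by uniqueness. The only difference is that you make the two-sided determination of the sequence on $\mathbb{Z}$ explicit, whereas the paper simply asserts that matching initial values suffices.
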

	\begin{proof} Due to the linearity of the Tetranacci recursion formula, any linear combination of solutions also satisfies \eref{equation: Tetranacci recursion formula}; thus, the l.h.s. of \eref{equation: final closed form for tetranaccis} is a symmetric Tetranacci polynomial. Hence, in case that \eref{equation: final closed form for tetranaccis} holds already for the initial values, this relation is true for generic integer $j$. Indeed, we find that $(j=-2,\,\ldots,\,1)$
		\begin{eqnarray}
			\xi_j\,=\,\sum\limits_{i=-2}^1\,g_i\,\mathcal{T}_i(j)\,=\,	\,\sum\limits_{i=-2}^1\,g_i\,\delta_{ij}\,=\,g_j
		\end{eqnarray}
		is correct, substituting \eref{equation: selective property of the basic tetranacci polynomials} in the intermediate step. \hfill $\square$
	\end{proof}
	\begin{table}[t]
		\centering\includegraphics[width = 0.5 \textwidth]{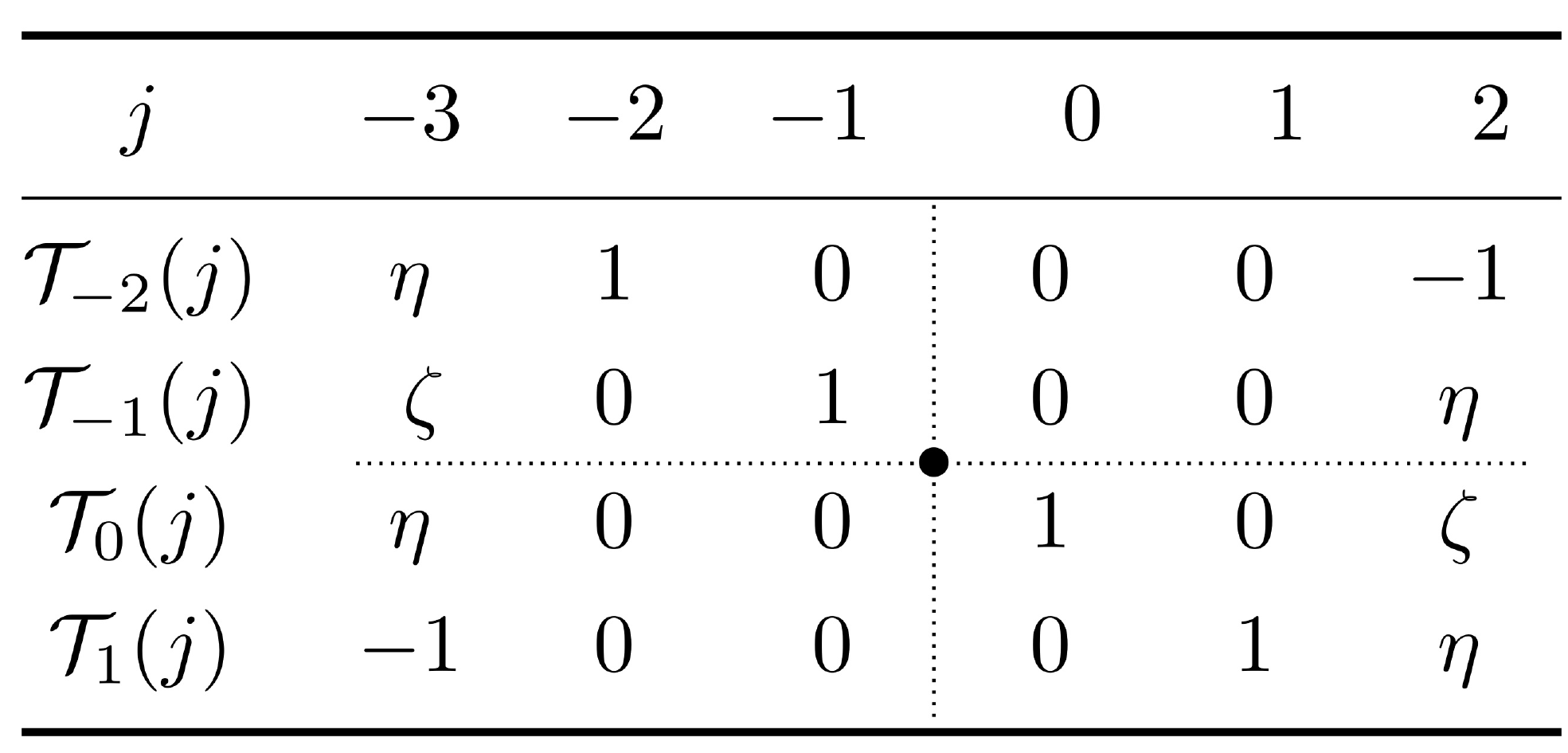}
	\caption{Basic Tetranacci polynomials $\mathcal{T}_i(j)$ for $j=-3,\,\ldots,\,2$. The central columns ($j=-2,\,\ldots,\,1$) provide the intitial values according to \eref{equation: selective property of the basic tetranacci polynomials}. The inversion point ($\bullet$) proposes the relations $\mathcal{T}_{0}(j)\,=\,\mathcal{T}_{-1}(-1-j)$ and $\mathcal{T}_{1}(j)\,=\,\mathcal{T}_{-2}(-1-j)$ for arbitrary $j$, $\eta$, $\zeta$ proven later in Lemma \ref{lemma: inversion relation of Tij}.}
	\label{table: first few values of the Basic Tetranacci polynomials}
	\end{table}
	Naturally, the description of $\xi_j$ in terms of  $\mathcal{T}_i(j)$ is not specific for Tetranacci polynomials and small modifications in both Definition \ref{definition:  the basic tetranacci polynomials} and \eref{equation: final closed form for tetranaccis} extend this strategy to arbitrary (linear) recursive problems.
	
	The basic Tetranacci polynomials inherit some specific traits originating from their particular initial values, as can be anticipated from \Tref{table: first few values of the Basic Tetranacci polynomials}. More importantly though, we find interconnections between $\mathcal{T}_{1}$ ($\mathcal{T}_{-1}$) and $\mathcal{T}_{-2}$ ($\mathcal{T}_{0}$) reducing effectively the number of involved quantities. Actually, the Lemmata \ref{lemma: inversion relation of Tij}, \ref{lemma: further interconnection of the basic Tetranacci polynomials} below even demonstrate that $\mathcal{T}_{-1}$, $\mathcal{T}_{0}$ and $\mathcal{T}_{1}$ can be constructed solely from $\mathcal{T}_{-2}$.
	\begin{lemma}\label{lemma: inversion relation of Tij}
		The basic Tetranacci polynomials $\mathcal{T}_{i}(j)$ ($i=-2,\,\ldots,\,1$) obey 
		\numparts
		\begin{eqnarray}
			\label{equation: inversion relation 1}
			\mathcal{T}_{1}(j)\,=\,\mathcal{T}_{-2}(-1-j),\\
			\label{equation: inversion relation 2}
			\mathcal{T}_{0}(j)\,=\,\mathcal{T}_{-1}(-1-j),\\
			\label{equation: inversion relation 3}
			\mathcal{T}_{-2}(j)\,=\,\mathcal{T}_{1}(-1-j),\\
			\label{equation: inversion relation 4}
			\mathcal{T}_{-1}(j)\,=\,\mathcal{T}_{0}(-1-j),
		\end{eqnarray}
		\endnumparts
		for all $j\in \mathbb{Z}$ and generic $\zeta,\,\eta \in \mathbb{C}$.
	\end{lemma}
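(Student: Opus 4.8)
\emph{Proof proposal.} The plan is to exploit a reflection symmetry of the recursion \eref{equation: Tetranacci recursion formula}. Observe first that the substitution $j\mapsto -1-j$ maps solutions of \eref{equation: Tetranacci recursion formula} to solutions: if $(\xi_j)_{j\in\mathbb{Z}}$ obeys the Tetranacci recursion, so does $\psi_j:=\xi_{-1-j}$. I would verify this by a one-line index shift, writing $\psi_{j+2}=\xi_{-3-j}$ and $\zeta\psi_j-\psi_{j-2}+\eta(\psi_{j+1}+\psi_{j-1})=\zeta\xi_{-1-j}-\xi_{1-j}+\eta(\xi_{-j}+\xi_{-2-j})$, and recognising these as equal after rearranging \eref{equation: Tetranacci recursion formula} evaluated at index $-1-j$. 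Structurally, this works precisely because the characteristic polynomial $t^4-\eta t^3-\zeta t^2-\eta t+1$ — the denominator of the generating function \eref{equation: generating function} — is palindromic, which is exactly the content of the "symmetric" restriction $x_{-2}=-1$, $x_1=x_{-1}$ singled out in Definition \ref{def.: Tetranacci polynomial}.

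Applied to $\xi_j=\mathcal{T}_k(j)$, this shows that for each fixed $k\in\{-2,\ldots,1\}$ the sequence $\chi_j:=\mathcal{T}_k(-1-j)$ is again a symmetric Tetranacci polynomial. Its four consecutive initial values are, by the selective property \eref{equation: selective property of the basic tetranacci polynomials}, $\chi_i=\mathcal{T}_k(-1-i)=\delta_{k,-1-i}$ for $i=-2,\ldots,1$. Feeding these into Corollary \ref{corollary: closed form for xij} gives $\chi_j=\sum_{i=-2}^{1}\delta_{k,-1-i}\,\mathcal{T}_i(j)=\mathcal{T}_{-1-k}(j)$, since $-1-k$ ranges over $\{-2,\ldots,1\}$ exactly when $k$ does. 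In other words, the single identity $\mathcal{T}_k(-1-j)=\mathcal{T}_{-1-k}(j)$ holds for all $j\in\mathbb{Z}$ and all admissible $k$, and the four displayed relations \eref{equation: inversion relation 1}--\eref{equation: inversion relation 4} are just its instances $k=-2,-1,0,1$ (with \eref{equation: inversion relation 3} and \eref{equation: inversion relation 4} obtained from \eref{equation: inversion relation 1} and \eref{equation: inversion relation 2} by the further replacement $j\mapsto-1-j$, noting $-1-(-1-j)=j$).

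The only step that deserves care is the appeal to Corollary \ref{corollary: closed form for xij}: it encodes that a symmetric Tetranacci polynomial is determined on all of $\mathbb{Z}$ by its four consecutive values $\xi_{-2},\ldots,\xi_1$, so that agreement of initial data forces agreement everywhere. This in turn rests on \eref{equation: Tetranacci recursion formula} being solvable backward as well as forward — i.e. on the trailing coefficient $x_{-2}=-1$ being nonzero. I do not anticipate a genuine obstacle beyond flagging this uniqueness point; everything else reduces to the reflection bookkeeping of the first paragraph and reading off Kronecker deltas.
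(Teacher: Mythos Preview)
Your argument is correct. Both your approach and the paper's rest on the same two ingredients---the reflection symmetry of the recursion under $j\mapsto -1-j$ (equivalently, the palindromic characteristic polynomial) and the matching of four consecutive initial values---but they organise them differently. The paper first reduces \eref{equation: inversion relation 3}, \eref{equation: inversion relation 4} to \eref{equation: inversion relation 1}, \eref{equation: inversion relation 2} by the substitution $j\mapsto -1-j$, then for \eref{equation: inversion relation 1} checks the base cases $j=-3,\ldots,0$ from \Tref{table: first few values of the Basic Tetranacci polynomials} and carries out an explicit two-sided induction, writing out the recursion at $j=n$ for $\mathcal{T}_1$ and at $j=-1-n$ for $\mathcal{T}_{-2}$ and comparing. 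Your version packages this induction once and for all as the lemma ``reflection sends solutions to solutions'', and then invokes Corollary~\ref{corollary: closed form for xij} as a uniqueness statement to dispatch all four identities simultaneously via the single formula $\mathcal{T}_k(-1-j)=\mathcal{T}_{-1-k}(j)$. This is more economical and makes the role of the palindromic structure explicit; the paper's hands-on induction, in exchange, avoids the forward reference to Corollary~\ref{corollary: closed form for xij} and keeps everything self-contained at the level of the recursion.
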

	\begin{proof}
		Notice that once the validity of \eref{equation: inversion relation 1} (\eref{equation: inversion relation 2}) is shown, \eref{equation: inversion relation 3} (\eref{equation: inversion relation 4}) follows automatically by setting $l= -1-j$ and renaming $l\rightarrow j$ afterwards. Since the proofs of \eref{equation: inversion relation 1}, \eref{equation: inversion relation 2} are similar, we focus only on the former. The presented values in \Tref{table: first few values of the Basic Tetranacci polynomials} imply the validity of \eref{equation: inversion relation 1} already for $j=-2,\,-1,\,0$: $ \mathcal{T}_1(j) = \mathcal{T}_{-2}(-1-j) = 0 $. At $j=-3$, we find $\mathcal{T}_1(j) = \mathcal{T}_{-2}(-1-j) = -1$. Assuming that \eref{equation: inversion relation 1} holds already for some integers $n-2,\,n-1,\,n,\,n+1$, we are left to demonstrate \eref{equation: inversion relation 1} at $n+2$ ($n-3$) for increasing (decreasing) indices. Since $\mathcal{T}_{1}(j)$, $\mathcal{T}_{-2}(j)$ are symmetric Tetranacci polynomials, \eref{equation: Tetranacci recursion formula} gives
		\begin{eqnarray}\label{equation: inversion prop1}
			\mathcal{T}_1(n+2)\,=\,\zeta\,\mathcal{T}_1(n)\,-\,\mathcal{T}_1(n-2)\,+\,\eta\,\left[\mathcal{T}_1(n+1)\,+\,\mathcal{T}_1(n-1)\right]	
		\end{eqnarray}
		at $j=n$. Similarly at $j=-1-n$, we find
		\begin{eqnarray}\label{equation: inversion prop2}
		\quad~\fl \mathcal{T}_{-2}(-3-n)\,=\,\zeta\,\mathcal{T}_{-2}(-1-n)\,-\,\mathcal{T}_{-2}(1-n)\,+\,\eta\,\left[\mathcal{T}_{-2}(-2-n) +  \mathcal{T}_{-2}(-n) \right],
		\end{eqnarray}
		after reordering the terms. Due to our assumption, we find that \eref{equation: inversion prop1}, \eref{equation: inversion prop2} are identical which is equivalent to $\mathcal{T}_1(j) =\mathcal{T}_{-2}(-1-j)$ at $j=n+2$. The demonstration for decreasing indices, i.e. for $n-3$, is carried out analogously by exchanging the $j+2$ and $j-2$ terms in \eref{equation: Tetranacci recursion formula}. \hfill $\square$
	\end{proof}
	\begin{lemma}\label{lemma: further interconnection of the basic Tetranacci polynomials} The basic Tetranacci polynomials
		$\mathcal{T}_{-2}(j),\,\ldots,\,\mathcal{T}_{1}(j)$ obey
		\numparts
		\begin{eqnarray}
			\label{equation: T-2 is odd in j}
			\mathcal{T}_{-2}(j)\,&=\,-\mathcal{T}_{-2}(-j),\\
			\label{equation: T-1j in terms of T-2}
			\mathcal{T}_{-1}(j)\,&=\,\mathcal{T}_{-2}(j-1)-\eta\,\mathcal{T}_{-2}(j),\\
			\label{equation: T0j in terms of T-2}
			\mathcal{T}_{0}(j)\,&=\,\eta\,\mathcal{T}_{-2}(j+1)-\mathcal{T}_{-2}(j+2),\\
			\label{equation: T1j in terms of T-2}
			\mathcal{T}_{1}(j)\,&=\,-\mathcal{T}_{-2}(j+1),
		\end{eqnarray}
		\endnumparts
		for all $j\in \mathbb{Z}$ and generic $\zeta,\,\eta \in \mathbb{C}$.
	\end{lemma}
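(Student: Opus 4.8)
The plan is to prove the four identities \eref{equation: T-2 is odd in j}--\eref{equation: T1j in terms of T-2} by the same principle used repeatedly above: each side is a symmetric Tetranacci polynomial (as a linear combination of shifts of solutions to \eref{equation: Tetranacci recursion formula}, using linearity and shift-invariance of the recursion), so it suffices to check agreement on four consecutive initial indices, say $j=-2,-1,0,1$, and then invoke Definition \ref{definition:  the basic tetranacci polynomials} together with Corollary \ref{corollary: closed form for xij}. Concretely, I would first establish \eref{equation: T-2 is odd in j}: the map $j\mapsto -\mathcal{T}_{-2}(-j)$ satisfies \eref{equation: Tetranacci recursion formula} because the recursion is invariant under $j\to -j$ (the coefficients of the $j\pm1$ and $j\pm2$ terms are symmetric) and under overall sign; then I check the boundary values from \Tref{table: first few values of the Basic Tetranacci polynomials}, namely $\mathcal{T}_{-2}(j)=0$ for $j=-1,0,1$ and $\mathcal{T}_{-2}(-2)=0$ (which must equal $-\mathcal{T}_{-2}(2)$, also $0$ by the selective property read off the table). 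Since both sequences vanish on four consecutive indices containing $\{-2,\dots,1\}$, they coincide everywhere.

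For \eref{equation: T1j in terms of T-2}, consider $j\mapsto -\mathcal{T}_{-2}(j+1)$: it is again a symmetric Tetranacci polynomial (a single shift and sign flip of $\mathcal{T}_{-2}$), and on $j=-2,-1,0,1$ it takes the values $-\mathcal{T}_{-2}(-1),-\mathcal{T}_{-2}(0),-\mathcal{T}_{-2}(1),-\mathcal{T}_{-2}(2)=0,0,0,1$, matching $\mathcal{T}_{1}(j)=\delta_{1j}$ by \eref{equation: selective property of the basic tetranacci polynomials}. Alternatively, \eref{equation: T1j in terms of T-2} is immediate by combining \eref{equation: inversion relation 1} of Lemma \ref{lemma: inversion relation of Tij} with the oddness \eref{equation: T-2 is odd in j}: $\mathcal{T}_{1}(j)=\mathcal{T}_{-2}(-1-j)=-\mathcal{T}_{-2}(j+1)$. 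I would then derive \eref{equation: T-1j in terms of T-2} and \eref{equation: T0j in terms of T-2} the same way: for \eref{equation: T-1j in terms of T-2}, the right-hand side $\mathcal{T}_{-2}(j-1)-\eta\,\mathcal{T}_{-2}(j)$ is a symmetric Tetranacci polynomial, and evaluating at $j=-2,-1,0,1$ gives (using the table) $\mathcal{T}_{-2}(-3)-\eta\mathcal{T}_{-2}(-2)=-1-0,\ \mathcal{T}_{-2}(-2)-\eta\mathcal{T}_{-2}(-1)=0,\ \mathcal{T}_{-2}(-1)-\eta\mathcal{T}_{-2}(0)=0,\ \mathcal{T}_{-2}(0)-\eta\mathcal{T}_{-2}(1)=0$, i.e. $\delta_{-1,j}\cdot$? — here I must double-check the sign/normalization of $\mathcal{T}_{-2}(-3)$ against the table so that the result equals $\mathcal{T}_{-1}(j)=\delta_{-1,j}$. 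A cleaner route is to use Lemma \ref{lemma: inversion relation of Tij}: $\mathcal{T}_{-1}(j)=\mathcal{T}_{0}(-1-j)$ and $\mathcal{T}_{0}(j)=\mathcal{T}_{-1}(-1-j)$, so it is enough to prove \eref{equation: T0j in terms of T-2} and then obtain \eref{equation: T-1j in terms of T-2} by the substitution $j\to -1-j$ together with \eref{equation: T-2 is odd in j}; for \eref{equation: T0j in terms of T-2} itself, $\eta\,\mathcal{T}_{-2}(j+1)-\mathcal{T}_{-2}(j+2)$ checked on four consecutive indices matches $\mathcal{T}_{0}(j)=\delta_{0j}$.

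The main obstacle I anticipate is purely bookkeeping rather than conceptual: getting the index shifts and the signs consistent with the sign convention chosen for $\mathcal{T}_{-2}$ in \Tref{table: first few values of the Basic Tetranacci polynomials} (in particular the values $\mathcal{T}_{-2}(-3)$, $\mathcal{T}_{-2}(2)$, $\mathcal{T}_{-2}(3)$ obtained by running \eref{equation: Tetranacci recursion formula} forwards and backwards from the selective initial data). Once \eref{equation: T-2 is odd in j} and \eref{equation: T1j in terms of T-2} are in place, I would present \eref{equation: T-1j in terms of T-2} and \eref{equation: T0j in terms of T-2} as short corollaries, deriving one from the other via the inversion relations of Lemma \ref{lemma: inversion relation of Tij} so that only a single four-index verification is actually needed, the rest following from symmetry. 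This keeps the proof to essentially one genuine base-case computation plus appeals to linearity, shift-invariance, and $j\to -j$ symmetry of the recursion.
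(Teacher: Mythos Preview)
Your approach is essentially the same as the paper's: show that the right-hand side of each identity is itself a symmetric Tetranacci polynomial (by linearity and shift/reflection invariance of \eref{equation: Tetranacci recursion formula}), then verify the selective property \eref{equation: selective property of the basic tetranacci polynomials} on the four initial indices $j=-2,\ldots,1$. The paper carries out the oddness \eref{equation: T-2 is odd in j} as a short induction, proves \eref{equation: T-1j in terms of T-2} by the four-index check you describe, and then derives \eref{equation: T0j in terms of T-2} and \eref{equation: T1j in terms of T-2} from \eref{equation: T-1j in terms of T-2} and \eref{equation: T-2 is odd in j} via the inversion relations of Lemma \ref{lemma: inversion relation of Tij}. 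Your ``cleaner route'' is exactly this, only with the roles of \eref{equation: T-1j in terms of T-2} and \eref{equation: T0j in terms of T-2} swapped; either direction works.

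The one genuine slip is in the values you quote. By the selective property, $\mathcal{T}_{-2}(-2)=\delta_{-2,-2}=1$, not $0$; and $j=2$ lies \emph{outside} the range of \eref{equation: selective property of the basic tetranacci polynomials}, so $\mathcal{T}_{-2}(2)$ must be computed from the recursion (Table~\ref{table: first few values of the Basic Tetranacci polynomials} gives $\mathcal{T}_{-2}(2)=-1$). Likewise $\mathcal{T}_{-2}(-3)=\eta$, not $-1$. With these corrections the base-case check for \eref{equation: T-2 is odd in j} becomes $1=-(-1)$ at $j=-2$ and $0=0$ at $j=-1,0,1$, and the check for \eref{equation: T-1j in terms of T-2} at $j=-2$ reads $\mathcal{T}_{-2}(-3)-\eta\,\mathcal{T}_{-2}(-2)=\eta-\eta=0=\mathcal{T}_{-1}(-2)$, exactly as in the paper. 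Once you fix these table values, your proposal goes through unchanged.
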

	\begin{proof}
		We focus first on \eref{equation: T-2 is odd in j}, where a proof for $j\ge 0$ is sufficient. Apparently, \eref{equation: T-2 is odd in j} is already valid for $j=0,\,1,\,2$ (cf. \Tref{table: first few values of the Basic Tetranacci polynomials}). For $j=3$, we obtain $\mathcal{T}_{-2}(3) = -\eta$ from \eref{equation: Tetranacci recursion formula}, i.e. we find $\mathcal{T}_{-2}(3) = -\eta = -\mathcal{T}_{-2}(-3)$. Assuming \eref{equation: T-2 is odd in j} is true for $n-2,\,n-1,\,n,\,n+1$ ($n\ge 2$) we demonstrate its validity at $n+2$. \eref{equation: Tetranacci recursion formula}, implies
		\begin{eqnarray}
		\qquad\fl\mathcal{T}_{-2}(n+2) &= \zeta\, \mathcal{T}_{-2}(n) - \mathcal{T}_{-2}(n-2) +\eta\left[\mathcal{T}_{-2}(n+1)+ \mathcal{T}_{-2}(n-1)\right],\\
		\qquad\fl\mathcal{T}_{-2}(-n-2) &= \zeta\, \mathcal{T}_{-2}(-n) - \mathcal{T}_{-2}(2-n) +\eta\left[\mathcal{T}_{-2}(-1-n)+ \mathcal{T}_{-2}(1-n)\right].
		\end{eqnarray}
		Due to our assumption, the two expressions differ only by a sign. Thus, \eref{equation: T-2 is odd in j} is valid. Next, we focus on \eref{equation: T-1j in terms of T-2}. $\mathcal{T}_{-2}(j-1)$ is apparently a solution to \eref{equation: Tetranacci recursion formula}, since $\mathcal{T}_{-2}(j)$ is a symmetric Tetranacci polynomial, i.e.  $\mathcal{T}_{-2}(j-1)-\eta\,\mathcal{T}_{-2}(j)$ is one as well due to the linearity of \eref{equation: Tetranacci recursion formula}. Hence, the latter has only to satisfy the selective property of $\mathcal{T}_{-1}(j)$ for \eref{equation: T-1j in terms of T-2} to be correct. Indeed, we have (cf. \Tref{table: first few values of the Basic Tetranacci polynomials})
		\numparts
		\begin{eqnarray}
			j=-2:&\quad \mathcal{T}_{-2}(-3)-\eta\,\mathcal{T}_{-2}(-2) = \eta -\eta = 0  \equiv \mathcal{T}_{-1}(-2),\\
			j=-1:&\quad \mathcal{T}_{-2}(-2)-\eta\,\mathcal{T}_{-2}(-1) = 1 - 0 = 1  \equiv \mathcal{T}_{-1}(-1),\\
			j=0:&\quad \mathcal{T}_{-2}(-1)-\eta\,\mathcal{T}_{-2}(0) = 0 - 0 =  0 \equiv \mathcal{T}_{-1}(0),\\
			j=1:&\quad \mathcal{T}_{-2}(0)-\eta\,\mathcal{T}_{-2}(1) = 0 - 0 =  0 \equiv \mathcal{T}_{-1}(1).
		\end{eqnarray}
		\endnumparts
		The correctness of \eref{equation: T0j in terms of T-2} is a direct consequence of \eref{equation: inversion relation 2}, \eref{equation: T-2 is odd in j}, \eref{equation: T-1j in terms of T-2}:
		\begin{eqnarray}
			\mathcal{T}_0(j)= \mathcal{T}_{-1}(-1-j)  &=\mathcal{T}_{-2}(-2-j)-\eta\,\mathcal{T}_{-2}(-1-j)\nonumber\\&=\eta\,\mathcal{T}_{-2}(j+1)- \mathcal{T}_{-2}(2+j),
		\end{eqnarray}
		while \eref{equation: T1j in terms of T-2} follows from \eref{equation: inversion relation 1}, \eref{equation: T-2 is odd in j}:
		\begin{eqnarray}
			\mathcal{T}_1(j) = \mathcal{T}_{-2}(-1-j) =- \mathcal{T}_{-2}(j+1).
		\end{eqnarray}
		Thus, the statements are correct. \hfill $\square$
	\end{proof}
	In the view of Corollary \ref{corollary: closed form for xij} and the Lemmata \ref{lemma: inversion relation of Tij}, \ref{lemma: further interconnection of the basic Tetranacci polynomials}, the closed form expression of an arbitrary Tetranacci polynomial $\xi_j$ demands merely the one of $\mathcal{T}_{-2}$. Yet the final result for $\mathcal{T}_{-2}$ presented in Theorem \ref{theorem: Fibonacci decomposition of T-2} below requires some preparation.
	
	Furthermore, \eref{equation: final closed form for tetranaccis} is also interesting when studying algebraic properties of $\xi_j$. For instance, we find ($j \in \mathbb{Z}$)
	\begin{eqnarray}
		\xi_{-1-j} = g_{-2} \,\mathcal{T}_1(j) +g_{-1} \,\mathcal{T}_0(j) + g_{0} \,\mathcal{T}_{-1}(j) + g_{1} \,\mathcal{T}_{-2}(j)
	\end{eqnarray}
	by imposing \eref{equation: inversion relation 1}-\eref{equation: inversion relation 4} on \eref{equation: final closed form for tetranaccis}. Thus, $\xi_{-1-j} = \xi_j$ holds in case that $g_{-2} = g_1$ and $g_{-1} = g_0$ are true. Similar properties of $\xi_j$ may follow, once they have been proven for $\mathcal{T}_{-2}(j),\,\ldots,\,\mathcal{T}_1(j)$.
	
	Although \eref{equation: tetranacci plane wave form} below is rather trivial from the mathematical point of view, i.e. that $\xi_j$ can be written as linear combination of complex entities, the Lemma summarizes (to our best knowledge) all relations necessary to diagonalize symmetric Toeplitz matrices of bandwidth two. As long as \eref{equation: tetranacci plane wave form} is valid, this relation is consistent with the in solid state physics famous Bloch's theorem without touching further details \cite{Flensberg2004, Leumerthesis}. Nevertheless, the consequences imposed by the quantities $S_{1,2}$ (defined in Lemma \ref{lemma: exponential form for tetranaccis} below) are of fundamental importance for us.
	\begin{lemma}\label{lemma: exponential form for tetranaccis}
		Any symmetric Tetranacci polynomial can be expressed as
		\begin{eqnarray}\label{equation: tetranacci plane wave form}
			\xi_j\,=\,A\,e^{\rmi \theta_1\,j}\,+\,B\,e^{-\rmi \theta_1\,j}\,+\,C\,e^{\rmi \theta_2\,j}\,+\,D\,e^{-\rmi \theta_2\,j},
		\end{eqnarray}
		provided that $S_1 \neq S_2$ and $S_{1,2}\neq \pm 2$ hold true where $S_{1,2} = (\eta\pm \sqrt{\eta^2+4(\zeta+2)}\,)/2$. In \eref{equation: tetranacci plane wave form}, we introduced $\theta_{1,2}\in \mathbb{C}$ defined by $2\cos(\theta_{1,2})= S_{1,2}$. The coefficients $A$, $B$, $C$, $D$ are set implicitly by $\xi_i = g_{i}$, $i=-2,\ldots,\,1$.
	\end{lemma}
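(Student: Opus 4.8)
The plan is to recognise \eref{equation: tetranacci plane wave form} as nothing but the general solution of the linear recursion \eref{equation: Tetranacci recursion formula} in the regime where its characteristic polynomial has four distinct roots.

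First I would insert the ansatz $\xi_j = \lambda^j$ into \eref{equation: Tetranacci recursion formula} and divide by $\lambda^{j-2}$ (permissible since $\lambda = 0$ never solves the recursion), obtaining the characteristic equation
\begin{equation}\label{equation: characteristic equation tetranacci}
\lambda^4 - \eta\,\lambda^3 - \zeta\,\lambda^2 - \eta\,\lambda + 1 = 0 .
\end{equation}
Because \eref{equation: characteristic equation tetranacci} is palindromic, its roots occur in reciprocal pairs, and the next step is to exploit this: dividing by $\lambda^2$ and introducing $S = \lambda + \lambda^{-1}$, so that $\lambda^2 + \lambda^{-2} = S^2 - 2$, reduces \eref{equation: characteristic equation tetranacci} to the quadratic $S^2 - \eta\,S - (\zeta + 2) = 0$, whose two roots are precisely $S_{1,2} = (\eta \pm \sqrt{\eta^2 + 4(\zeta + 2)})/2$. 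For each $k \in \{1,2\}$ the relation $\lambda + \lambda^{-1} = S_k$ is the quadratic $\lambda^2 - S_k\,\lambda + 1 = 0$ with a reciprocal pair of roots; writing them as $e^{\pm\rmi\theta_k}$ with $\theta_k \in \mathbb{C}$ fixed by $2\cos(\theta_k) = S_k$ produces exactly the four exponentials occurring in \eref{equation: tetranacci plane wave form}.

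I would then verify that these four roots are pairwise distinct under the stated hypotheses: a coincidence $e^{\rmi\theta_k} = e^{-\rmi\theta_k}$ within one pair forces $\cos(\theta_k) = \pm 1$, i.e. $S_k = \pm 2$, which is excluded, while a coincidence between the two pairs forces $\cos(\theta_1) = \cos(\theta_2)$, i.e. $S_1 = S_2$, also excluded. Hence \eref{equation: characteristic equation tetranacci} has four distinct nonzero roots, and, since none of them vanishes, each sequence $e^{\pm\rmi\theta_k j}$ solves \eref{equation: Tetranacci recursion formula} for all $j \in \mathbb{Z}$ (the recursion being iterable in both directions). The space of bi-infinite solutions is four dimensional — a solution is fixed by the four consecutive data $\xi_{-2},\ldots,\xi_1$ — so these four linearly independent sequences span it, which establishes \eref{equation: tetranacci plane wave form}. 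Finally, imposing $\xi_i = g_i$ for $i = -2,\ldots,1$ yields a $4\times 4$ linear system for $A,B,C,D$ whose coefficient matrix is a Vandermonde-type matrix in the four distinct roots; its nonvanishing determinant fixes $A,B,C,D$ uniquely.

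The main obstacle is bookkeeping rather than analysis: one must check that the substitution $S = \lambda + \lambda^{-1}$ is an exact two-to-one correspondence between the roots of \eref{equation: characteristic equation tetranacci} and those of the reduced quadratic (no root lost or created, which is where $\lambda \neq 0$ enters), and that the distinctness of the four roots translates without gap into the two conditions $S_1 \neq S_2$ and $S_{1,2} \neq \pm 2$. The degenerate cases $S_1 = S_2$ or $S_k = \pm 2$, where repeated roots generate secular terms $\propto j\,e^{\rmi\theta j}$ and \eref{equation: tetranacci plane wave form} fails, lie outside the scope of the Lemma and are treated separately in \ref{appendix: degenerated roots}. \hfill $\square$
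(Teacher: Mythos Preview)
Your proof is correct and follows essentially the same route as the paper: power-law ansatz, reduction of the palindromic quartic via $S=\lambda+\lambda^{-1}$ to the quadratic $S^2-\eta S-(\zeta+2)=0$, and identification $r_{\pm l}=e^{\pm\rmi\theta_l}$ through $2\cos(\theta_l)=S_l$. You are in fact slightly more careful than the paper in spelling out why the hypotheses $S_1\neq S_2$ and $S_{1,2}\neq\pm 2$ are exactly the conditions for four distinct roots, and in noting the Vandermonde structure that fixes $A,B,C,D$.
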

	\begin{proof} The announced result is found straightforwardly by the power law ansatz $\xi_j\propto r^j$ ($r\neq 0$) on \eref{equation: Tetranacci recursion formula}. After substituting the ansatz and dividing by $r^j\neq 0$, we arrive at the characteristic equation: 
		\begin{eqnarray}\label{equation: characteristic equation for r}
			r^2 \,+\, \frac{1}{r^2}\,-\,\zeta\,-\,\eta\left(r+\frac{1}{r}\right)\,=\,0.
		\end{eqnarray}
		Its peculiar form suggests to introduce the variable $S = r+r^{-1}$, granting in turn the quadratic equation $S^2 - \eta S-\zeta-2 = 0$, whose zeros are
		\begin{eqnarray}\label{equation: definition S12}
			S_{1,2} = \frac{\eta\pm \sqrt{\eta^2+4(\zeta+2)}}{2}.
		\end{eqnarray}
		Solving $S = r+r^{-1}$ for $r$ at $S=S_{1,2}$ yields
		\begin{eqnarray}\label{equation: definition rpml in the power ansatz}
			r_{\pm l}\,=\,\frac{S_l\pm \sqrt{S_l^2-4}}{2},\quad l = 1,\,2,
		\end{eqnarray}
		possessing the properties of $r_{+ l} r_{- l} = 1$ and $r_{+ l} +r_{- l} = S_l$ for $l=1,2$. In case of $S_1 \neq S_2$,  $S_{1,2}\neq \pm 2$, we can thus express $\xi_j$ as their linear combination:
		\begin{eqnarray}\label{equation: bloch like theorem for non-degenerate roots}
			\xi_j=A\, r_{+1}^j+B\,r_{-1}^j +C \,r_{+2}^j+D\, r_{-2}^j.
		\end{eqnarray}
		The coefficients $A$, $B$, $C$, $D$ are to be set by the initial values $g_{-2},\ldots, g_1$ of $\xi_j$. Introducing $\theta_{1,2}$ by $2\cos(\theta_{1,2})= S_{1,2}$ grants $r_{\pm l} = \exp(\pm \rmi \theta_{l})$. \hfill $\square$
	\end{proof}
	Any degeneracy of the roots $r_{\pm 1,2}$ alters \eref{equation: tetranacci plane wave form} qualitatively, i.e. the closed form expression for $\xi_j$ will change, and we refer here to \ref{appendix: degenerated roots} for further details. Perhaps contrary to the impression of the reader that we apply next Lemma \ref{lemma: exponential form for tetranaccis} to determine $\mathcal{T}_{-2}$ or $\xi_{-2}$ resulting in a Binet-like form, similar to the one for Tetranacci numbers in \cite{Soykan2020}, we follow a different strategy. In fact specific generalized Fibonacci polynomials also satisfy \eref{equation: Tetranacci recursion formula} out of which $\mathcal{T}_{-2}$ can be constructed. One has still to distinguish the cases of: i) $S_1\neq S_2$, ii) $S_1=S_2$, but $S_1^2\neq 4$ and iii) $S_1=S_2$, $S_1^2= 4$.
	\section{The Fibonacci decomposition}\label{section: Fibonacci decomposition}
	Generalized Fibonacci polynomials, defined here according to \eref{equation: generalized Fibonacci polynomials} (cf. \cite{Hoggatt1974}), are closely related to symmetric Tetranacci polynomials. The trivial limit is $\eta =0$, where \eref{equation: Tetranacci recursion formula} simplifies to \makebox{$\xi_{j+2} = \zeta \xi_j - \xi_{j-2}$}; thus, separating even and odd indices $j$. Defining then $v_l = \xi_{2l}$ ($u_l = \xi_{2l+1}$) immediately grants $u_{l+1} = \zeta u_l -u_{l-1}$, $v_{l+1} = \zeta v_l -v_{l-1}$. Yet, even for $\eta\neq 0$ we find that specific symmetric Tetranacci polynomials obey simultaneously a two term recursion formula.
	\begin{theorem} \label{theorem: hidden Fibonacci polynomials}
		The generalized Fibonacci polynomial $\varphi_{l}(j)$ ($l=1,\,2$), set by
		\begin{eqnarray}\label{equation: hidden Fibonacci polynomials}
			\varphi_{l}(j+1) \,=\,S_l\,\varphi_{l}(j)-\,\varphi_{l}(j-1),\quad j \in \mathbb{Z}
		\end{eqnarray}
		with $S_{1,2}\,=\,(\eta\pm \sqrt{\eta^2\,+\,4\left(\zeta +2\right)})/2$ and initial values $\varphi_{l}(0) =0$, $\varphi_{l}(1)=1$ is a symmetric Tetranacci polynomial.
	\end{theorem}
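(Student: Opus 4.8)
The plan is to show that $\varphi_l(j)$, defined by the two-term recursion \eref{equation: hidden Fibonacci polynomials}, automatically satisfies the four-term recursion \eref{equation: Tetranacci recursion formula}. The natural route is purely algebraic: start from the Tetranacci recursion evaluated on $\varphi_l$, namely the quantity $\varphi_l(j+2)-\zeta\,\varphi_l(j)+\varphi_l(j-2)-\eta\bigl(\varphi_l(j+1)+\varphi_l(j-1)\bigr)$, and reduce every term to a linear combination of $\varphi_l(j)$ and $\varphi_l(j-1)$ using \eref{equation: hidden Fibonacci polynomials} repeatedly. Concretely, $\varphi_l(j+1)=S_l\varphi_l(j)-\varphi_l(j-1)$ and $\varphi_l(j+2)=S_l\varphi_l(j+1)-\varphi_l(j)=(S_l^2-1)\varphi_l(j)-S_l\varphi_l(j-1)$; similarly one expresses $\varphi_l(j-2)=\varphi_l(j)\cdot(\text{something})+\varphi_l(j-1)\cdot(\text{something})$ by solving the recursion backwards, i.e. $\varphi_l(j-2)=S_l\varphi_l(j-1)-\varphi_l(j)$ after one step and then $\varphi_l(j-1)$ stays as is, or more cleanly write everything in terms of $\varphi_l(j+1)$ and $\varphi_l(j)$ to keep indices moving one way.

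After substitution, the bracket collapses to $\bigl(S_l^2-1-\zeta-1\bigr)\varphi_l(j)+\bigl(-S_l-\eta S_l+\eta\cdot(\ldots)\bigr)\varphi_l(j-1)$ — i.e. a linear form $P(S_l,\zeta,\eta)\,\varphi_l(j)+Q(S_l,\zeta,\eta)\,\varphi_l(j-1)$ whose coefficients $P,Q$ are polynomials in $S_l,\zeta,\eta$ not depending on $j$. The claim is then equivalent to $P=Q=0$, and this is exactly where the defining quadratic $S_l^2-\eta S_l-\zeta-2=0$ (from Lemma \ref{lemma: exponential form for tetranaccis}, \eref{equation: definition S12}) enters: $P$ and $Q$ should each be a multiple of $S_l^2-\eta S_l-(\zeta+2)$, hence vanish. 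So the key step is to carry out the substitution carefully, collect the coefficients of $\varphi_l(j)$ and $\varphi_l(j-1)$, and recognize the factor $S_l^2-\eta S_l-\zeta-2$; a parallel and faster way to organize this is to note that $r_{\pm l}=e^{\pm\rmi\theta_l}$ are the two roots of $r^2-S_l r+1=0$, and since $S_l$ solves $S^2-\eta S-(\zeta+2)=0$, each $r_{\pm l}$ satisfies the quartic characteristic equation \eref{equation: characteristic equation for r} of the Tetranacci recursion; because $\varphi_l(j)$ is a linear combination of $r_{+l}^{\,j}$ and $r_{-l}^{\,j}$ (or $j\,r_{+l}^{\,j}$-type terms if $S_l^2=4$), it inherits the four-term recursion by linearity.

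Finally I would verify the ``symmetric'' qualifier: since $\varphi_l(j)$ satisfies \eref{equation: Tetranacci recursion formula} for all $j\in\mathbb{Z}$, it is by Definition \ref{def.: Tetranacci polynomial} a symmetric Tetranacci polynomial with initial values $g_i=\varphi_l(i)$, $i=-2,\ldots,1$ (which are themselves determined by running \eref{equation: hidden Fibonacci polynomials} backwards from $\varphi_l(0)=0$, $\varphi_l(1)=1$); no extra condition is needed. The only genuine obstacle is bookkeeping: one must be consistent about shifting indices so that after eliminating $\varphi_l(j\pm1),\varphi_l(j\pm2)$ everything lands in the same two-dimensional span, and one must not accidentally divide by $S_l^2-4$ — the argument above via the quartic characteristic equation sidesteps this, since $r^2-S_lr+1=0$ divides $r^4-\eta r^3-\zeta r^2-\eta r+1$ as a polynomial identity regardless of whether the roots $r_{\pm l}$ are distinct, so the recursion is inherited in all three cases i), ii), iii) uniformly, and I would phrase the proof that way to keep it short.
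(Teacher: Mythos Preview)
Your proposal is correct and follows essentially the same route as the paper: both substitute the two-term recursion into the Tetranacci relation, reduce to a linear combination over the Fibonacci span, and read off that the coefficients vanish precisely because $S_l^2-\eta S_l-\zeta-2=0$ (the paper organizes this as ``derive $x=S_l$, $y=-1$ from a generic ansatz'', you as ``verify the given $S_l$ works'', but the algebra is identical). Your side remark that $r^2-S_l r+1$ divides the quartic $r^4-\eta r^3-\zeta r^2-\eta r+1$ as a polynomial identity is a clean alternative packaging that the paper does not spell out, and it has the advantage you note of handling the degenerate case $S_l^2=4$ without any separate treatment.
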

	\begin{proof}
		For the sake of clarity, we suppress the index $l=1,2$ in the following. The proposed statement follows straightforwardly by assuming initially that $\varphi_{l}(j)$ obeys $\varphi (j+1) \,=\,x\,\varphi (j)+\,y\,\varphi(j-1)$ ($j \in \mathbb{Z}$) for arbitrary (complex) $x$, $y$ and initial values $f_0$, $f_1$. In order to be a symmetric Tetranacci polynomial, $\varphi_{l}(j)$ has to satisfy also \eref{equation: Tetranacci recursion formula}. Replacing in \eref{equation: Tetranacci recursion formula} all terms carrying the indices $j+2$, $j+1$ grant
		\begin{eqnarray}\label{equation: intermediate Fibonacci recursion}
			(x^2 +y - \eta\,x -\zeta)\,\varphi (j) \,=\,(\eta y +\eta -xy)\, \varphi (j-1)\, -\, \varphi(j-2).
		\end{eqnarray}
		Comparing the coefficients between \eref{equation: intermediate Fibonacci recursion} and our ansatz for $\varphi$ sets $y=-1$ immediately. In turn, $\eta y +\eta -xy = x$ holds without restrictions on $x$. Instead, the latter is set by $1 = x^2 +y - \eta\,x -\zeta \equiv  x^2 - \eta x - \zeta - 1$ after substituting $y$. The associated quadratic equation has the two roots $S_{1,2} = (\eta\pm \sqrt{\eta^2\,+\,4\left(\zeta +2\right)}\,) /2$ introduced earlier in Lemma \ref{lemma: exponential form for tetranaccis}. Thus, $\varphi (j)$ obeys \eref{equation: Tetranacci recursion formula} for the announced coefficients. As $\varphi (j)$ is well defined by $f_0$, $f_1$ and \eref{equation: hidden Fibonacci polynomials}, the initial values $\varphi (-2)$, $\varphi (-1)$, $\varphi (0) =f_0$, $\varphi (1)=f_1$ for the Tetranacci recursion in \eref{equation: Tetranacci recursion formula} are fixed. Hence, $\varphi (j)$ satisfies the definition of symmetric Tetranacci polynomials. Without loss of generality, we choose $f_0 = 0$ and $f_1 = 1$ for simplicity.\hfill $\square$
	\end{proof}
	Notice that Theorem \ref{theorem: hidden Fibonacci polynomials} is an implication: An arbitrary symmetric Tetranacci polynomial $\xi_j$ will not obey \eref{equation: hidden Fibonacci polynomials} due to its generic initial values $g_{-2},\ldots, g_{1}$. For instance, we may choose $g_{-2} = \varphi_1(-2) + \epsilon$, $g_{-1} = \varphi_1(-1)$ and $g_0 = \varphi_1(0)=0$, $g_1 = \varphi_1(1)=1$ for $\epsilon>0$. 
	
	Further, \Tref{table: some values of varphi12} exposes the first few values of $\varphi_{1,2}$, from where we deduce the next proposition before we turn to their closed form expression.
	\begin{table}[tb]
	\begin{indented}\item[]
		\begin{tabular}{@{}cccccccc}     \br	 
			$j$ 			& 	$-3$	&	$-2$	&	$-1$ 	&	$0$ 	&	$1$ 	&	$2$ &	$3$	\\\hline 
			$\varphi_l(j)$ \rule{0pt}{3ex} 	&$-(S_l^2-1)$	&	$-S_l$	&	$-1$ 	&	$0$ & $1$ 	&	$S_l$ 	&	$S_l^2-1$	\\
			\br
		\end{tabular}
	\end{indented}
	\caption{\label{table: some values of varphi12}The first terms of $\varphi_{l}(j)$ ($l=1,\,2$) for $j=-3,\,\ldots,3$.}	
	\end{table}
	\begin{prop}\label{proposition: varphi is odd in j} The polynomials $\varphi_l(j)$ ($l=1,2$) satisfy $\varphi_l(j)=-\varphi_l(-j)$ for all $j\in \mathbb{Z}$.
	\end{prop}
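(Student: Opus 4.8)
The plan is to prove $\varphi_l(j) = -\varphi_l(-j)$ by induction on $|j|$, using the two-term recursion \eref{equation: hidden Fibonacci polynomials} and its backward version. First I would establish the base cases directly: $\varphi_l(0) = 0 = -\varphi_l(0)$ holds trivially, and $\varphi_l(1) = 1 = -(-1) = -\varphi_l(-1)$, where the value $\varphi_l(-1) = -1$ is read off from \Tref{table: some values of varphi12} (or computed from the recursion run backwards once). Since the recursion \eref{equation: hidden Fibonacci polynomials} is a second-order linear recurrence, knowing the statement at two consecutive arguments suffices to propagate it.

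Next I would carry out the induction step in both directions. Suppose $\varphi_l(k) = -\varphi_l(-k)$ and $\varphi_l(k-1) = -\varphi_l(-(k-1)) = -\varphi_l(1-k)$ for some $k \ge 1$. Applying \eref{equation: hidden Fibonacci polynomials} at $j = k$ gives $\varphi_l(k+1) = S_l\,\varphi_l(k) - \varphi_l(k-1)$. Meanwhile, rearranging \eref{equation: hidden Fibonacci polynomials} evaluated at $j = -k$ yields $\varphi_l(-k-1) = S_l\,\varphi_l(-k) - \varphi_l(1-k)$. Substituting the inductive hypotheses into the first expression, $\varphi_l(k+1) = S_l\,(-\varphi_l(-k)) - (-\varphi_l(1-k)) = -\big(S_l\,\varphi_l(-k) - \varphi_l(1-k)\big) = -\varphi_l(-k-1)$, which is exactly the claim at $j = k+1$. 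Since $\varphi_l$ is defined for all of $\mathbb{Z}$ and the statement at $j$ and $-j$ is symmetric, this single induction actually covers every integer.

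I do not anticipate a genuine obstacle here; the only point requiring a little care is bookkeeping the backward recursion correctly, i.e. checking that solving \eref{equation: hidden Fibonacci polynomials} for the ``$j-1$'' term rather than the ``$j+1$'' term produces $\varphi_l(j-1) = S_l\,\varphi_l(j) - \varphi_l(j+1)$, so that evaluating at the negated index lines up the terms as needed. One may note for brevity that this is the standard oddness of generalized Fibonacci polynomials under index reflection (compare the analogous \eref{equation: T-2 is odd in j} for $\mathcal{T}_{-2}$, proven the same way), and the present statement is in fact a special case of the reflection identity $\varphi_l(-j) = -\varphi_l(j)$ that holds for any solution of \eref{equation: hidden Fibonacci polynomials} with $\varphi_l(0) = 0$. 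Alternatively, if a non-inductive argument is preferred, one could invoke the closed (Binet-like) form $\varphi_l(j) = (r_{+l}^{\,j} - r_{-l}^{\,j})/(r_{+l} - r_{-l})$ with $r_{+l} r_{-l} = 1$ from Lemma \ref{lemma: exponential form for tetranaccis}, from which $\varphi_l(-j) = (r_{+l}^{-j} - r_{-l}^{-j})/(r_{+l} - r_{-l}) = (r_{-l}^{\,j} - r_{+l}^{\,j})/(r_{+l} - r_{-l}) = -\varphi_l(j)$ is immediate; but since this form has not yet been derived at this point in the paper, the clean inductive route above is the one I would present.
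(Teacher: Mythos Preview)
Your proposal is correct and follows essentially the same approach as the paper: both establish the base cases $j=0,1$ (using $\varphi_l(-1)=-1$) and then run an induction via the two-term recursion \eref{equation: hidden Fibonacci polynomials}, comparing $\varphi_l(k+1)=S_l\varphi_l(k)-\varphi_l(k-1)$ with the rearranged backward relation $\varphi_l(-k-1)=S_l\varphi_l(-k)-\varphi_l(1-k)$. Your remark that the Binet form is not yet available at this point is also accurate, so the inductive route is indeed the appropriate one here.
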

	\begin{proof} \eref{equation: hidden Fibonacci polynomials} and the initial values $\varphi_l(0)=0$, $\varphi_l(1)=1$ yield $\varphi_l(-1)=-1$. Thus, the statement is correct for $j=0,\,1$. Assuming that $\varphi_l(j)=-\varphi_l(-j)$ holds already at $n,\,n+1$ ($n\in \mathbb{N}_0$), \eref{equation: hidden Fibonacci polynomials} states that
		\begin{eqnarray}
			\varphi_l (n+2) \,=\,S_l\,\varphi_l(n+1)-\,\varphi_l (n)\,=\,-\left[S_l \varphi_l(-n-1)-\varphi_l(-n)\right]
		\end{eqnarray}
		is true. Exchanging $\varphi_l(j+1)$, $\varphi_l(j-1)$ in \eref{equation: hidden Fibonacci polynomials} and using that $n\ge 0$ implies $-n-1<-n$, identifies $S_l \varphi_l(-n-1)-\varphi_l(-n) =\varphi_l(-n-2)$.\hfill $\square$
	\end{proof}
	\begin{prop}\label{proposition: explicit closed form for varphi12}
		The explicit closed form expression for $\varphi_l(j)$ ($l=1,2$, $j \in \mathbb{Z}$) 
		\begin{eqnarray}\label{equation: Binet form for varphi}
			\varphi_{l}(j) = \frac{r_{+l}^j-r_{-l}^j}{r_{+l}-r_{-l}}
		\end{eqnarray}
		is Binet-like whenever $S_l\neq \pm 2$. In terms of $\theta_{1,2}$, one has $\varphi_{l} = \sin(\theta_l j)/ \sin(\theta_l)$ \cite{Webb1969, Hoggatt1974}. For $S_l=\pm 2$, we find
		\begin{eqnarray}\label{equation: linear form for varphi}
			\varphi_{l}(j) = j \left(\frac{S_l}{2}\right)^{j+1}.
		\end{eqnarray}
	\end{prop}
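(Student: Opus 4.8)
The plan is to solve the linear two-term recursion \eref{equation: hidden Fibonacci polynomials} directly by the characteristic-root method, exactly in the spirit of Lemma \ref{lemma: exponential form for tetranaccis}, and then fix the two free constants from the initial values. First I would substitute the power ansatz $\varphi_l(j)\propto \rho^j$ ($\rho\neq 0$) into \eref{equation: hidden Fibonacci polynomials}; dividing by $\rho^{j-1}$ yields the quadratic $\rho^2-S_l\rho+1=0$, whose roots are precisely the $r_{\pm l}$ of \eref{equation: definition rpml in the power ansatz}, with $r_{+l}r_{-l}=1$ and $r_{+l}+r_{-l}=S_l$. Since the discriminant is $S_l^2-4$, the two roots are distinct iff $S_l\neq\pm 2$, and nonzero in all cases. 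Because \eref{equation: hidden Fibonacci polynomials} is a two-sided order-two linear recursion, a solution is uniquely determined on all of $\mathbb{Z}$ by any two consecutive values; hence the superposition of the two independent power solutions is the general solution for every $j\in\mathbb{Z}$, not merely for $j\ge 0$.

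In the non-degenerate case $S_l\neq\pm 2$ I would write $\varphi_l(j)=A\,r_{+l}^j+B\,r_{-l}^j$ and impose $\varphi_l(0)=0$, $\varphi_l(1)=1$. The first condition forces $B=-A$, the second gives $A(r_{+l}-r_{-l})=1$, so $A=1/(r_{+l}-r_{-l})$ and \eref{equation: Binet form for varphi} follows at once. Using $r_{\pm l}=\exp(\pm\rmi\theta_l)$ from $2\cos\theta_l=S_l$ (legitimate here because $S_l\neq\pm 2$ is exactly $\sin\theta_l\neq 0$, even for complex $\theta_l$) turns the quotient of exponentials into $\sin(\theta_l j)/\sin(\theta_l)$, recovering the familiar sine form of Fibonacci polynomials from \cite{Webb1969, Hoggatt1974}.

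For the degenerate case $S_l=\pm 2$ the characteristic equation has the double root $\rho=S_l/2\in\{+1,-1\}$, so the general solution is $\varphi_l(j)=(A+Bj)(S_l/2)^j$; that $j\,(S_l/2)^j$ indeed solves \eref{equation: hidden Fibonacci polynomials} is checked by factoring out $(S_l/2)^{j-1}$ and using $(S_l/2)^2=1$ together with $S_l=2(S_l/2)$. Imposing $\varphi_l(0)=0$ gives $A=0$, and $\varphi_l(1)=1$ gives $B(S_l/2)=1$, i.e. $B=2/S_l$, whence $\varphi_l(j)=(2/S_l)\,j\,(S_l/2)^j=j\,(S_l/2)^{j-1}$; since $(S_l/2)^2=1$ this coincides with $j\,(S_l/2)^{j+1}$, which is \eref{equation: linear form for varphi}. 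As a consistency check one can match $j=-3,\ldots,3$ against \Tref{table: some values of varphi12} and observe that both closed forms are manifestly odd in $j$, in line with Proposition \ref{proposition: varphi is odd in j}.

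I do not anticipate a genuine obstacle: this is the textbook solution of a constant-coefficient linear recursion. The only two points needing a line of care are (i) justifying that the power-law superposition is the general solution over all of $\mathbb{Z}$ rather than just the non-negative integers, which follows from two-sided uniqueness for order-two recursions, and (ii) the bookkeeping that ``$S_l\neq\pm2$'' is precisely the dividing line between the simple-root (Binet/sine) regime and the double-root (linear) regime, together with the harmless identity $(S_l/2)^{j-1}=(S_l/2)^{j+1}$ valid when $S_l=\pm2$.
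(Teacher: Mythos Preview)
Your proposal is correct and follows essentially the same route as the paper's own proof: both apply the power-law ansatz to the two-term recursion \eref{equation: hidden Fibonacci polynomials}, extract the quadratic $r^2-S_lr+1=0$ with roots $r_{\pm l}$, fix the coefficients from $\varphi_l(0)=0$, $\varphi_l(1)=1$, and in the degenerate case $S_l=\pm 2$ verify that $j\,r_{+l}^j$ supplies the missing second solution before again matching the initial data. Your additional remarks on two-sided uniqueness over $\mathbb{Z}$ and the identity $(S_l/2)^{j-1}=(S_l/2)^{j+1}$ are welcome clarifications but do not depart from the paper's strategy.
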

	\begin{proof}
		First, we focus on $S_l\neq \pm 2$. Although \eref{equation: Binet form for varphi} and its version in terms of $\theta_l$ is known as the closed form expression for generalized Fibonacci polynomials (cf. \cite{Webb1969, Hoggatt1974}), we re-derive it for completeness and in order to better demonstrate the changes imposed by $S_l^2 = 4$ afterwards. Using the ansatz $\varphi_l \propto r^j$ ($r\neq 0$) on \eref{equation: hidden Fibonacci polynomials}, we find $r^2-S_l\,r+1 =0$ after dividing by $r^{j-1}$. The two roots are $r_{\pm l} = (S_l\pm \sqrt{S_l^2-4}) /2$ from \eref{equation: definition rpml in the power ansatz} in Lemma \ref{lemma: exponential form for tetranaccis} and $S_l\neq \pm 2$ implies $\varphi_l(j) = \alpha \,r_{+l}^j +\beta\, r_{-l}^j$. The coefficients $\alpha$, $\beta$ are set by $\varphi_l(0)=0$ and $\varphi_l(0)=1$ as $\alpha = r_{+l} -r_{-l}$, $\beta = -\alpha$ granting \eref{equation: Binet form for varphi}. Introducing $\theta_l$ as $2\cos(\theta_l)=S_l$ turns \eref{equation: Binet form for varphi} into $\sin(\theta_l j)/\sin(\theta_l)$.
		
		For $S_l= \pm 2$, the two roots $r_{\pm l}$ become degenerate: $r_{+l} = r_{-l} = S_l/2$. Hence, the linear combination $\varphi_l(j) = \alpha \,r_{+l}^j +\beta\, r_{-l}^j = \tilde{\alpha}\, r_{+l}^j$ becomes insufficient to properly account for the two initial values of $\varphi_l(j)$. Since the recursion formula \eref{equation: hidden Fibonacci polynomials} does not change qualitatively at $S_l^2 =4$, one misses in fact the second solution $j\, r_+^j$ as demonstrated in the following. Substituting the ansatz $\varphi_l(j)\propto j\, r_{+l}^j$ into \eref{equation: hidden Fibonacci polynomials} and reordering according to powers in $j$ grants
		\begin{eqnarray}
			j\,(r_{+l}^2-S_l\,r_{+l}+1) + r_{+l}^2-1 = 0
		\end{eqnarray}
		where the bracket vanishes since $r_{+l}$ is the root of $r^2-S_l\,r +1$. The second term vanishes due to $r_{+l} =S_l/2 = \pm 1$ found from $S_l^2 =4$. Thus, $\varphi_l(j) = \tilde{\alpha}\, r_{+l}^j +  \tilde{\beta} j\, r_{+l}^j = (\tilde{\alpha} + \tilde{\beta}\,j)r_{+l}^j$ is true. The initial values of $\varphi_l(j)$ set $\tilde{\alpha} = 0$,  $\tilde{\beta} = 1/r_+ = r_+$ and \eref{equation: linear form for varphi} is found. \hfill $\square$
	\end{proof}
	Similar as for "ordinary" symmetric Tetranacci polynomials $\xi_j$, the situation of $S_1 = S_2$ offers further special solutions to \eref{equation: Tetranacci recursion formula} apart from only $\varphi_{1,2}(j)$ (cf. Lemma \ref{lemma: exponential form for tetranaccis} and \ref{appendix: degenerated roots}). The following lemma is the last intermediate step, before we finally turn to one of the main results of the article: The decomposition of $\mathcal{T}_{-2}(j)$ (and thus any symmetric Tetranacci polynomial) in terms of the generalized Fibonacci polynomials $\varphi_{1,2}(j)$.
	\begin{lemma}\label{lemma: jvarphi symmetric Tetranacci for S1  =S2} For $S_1 = S_2$ (and $S_1 = \pm 2$) also $j \varphi_{1,2}(j)$ ($j^2 \varphi_{1,2}(j)$) is a symmetric Tetranacci polynomial.
	\end{lemma}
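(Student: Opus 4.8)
The plan is to mirror the argument of Theorem \ref{theorem: hidden Fibonacci polynomials}: substitute the two candidate sequences directly into \eref{equation: Tetranacci recursion formula} and exploit that $\varphi_l(j)$ already obeys \emph{both} \eref{equation: Tetranacci recursion formula} (by Theorem \ref{theorem: hidden Fibonacci polynomials}) and the two term recursion \eref{equation: hidden Fibonacci polynomials}. Throughout I abbreviate $\varphi\equiv\varphi_l$, $S\equiv S_l$ and write $L[\xi](j)=\xi(j+2)-\eta\,\xi(j+1)-\zeta\,\xi(j)-\eta\,\xi(j-1)+\xi(j-2)$, so that a sequence is a symmetric Tetranacci polynomial precisely when $L[\xi]\equiv 0$; in particular $L[\varphi]\equiv 0$. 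The only algebraic input is that $S_1=S_2$ makes the discriminant in \eref{equation: definition S12} vanish, equivalently $\eta=2S_l$, while the additional assumption $S_l=\pm2$ just means $S_l^2=4$.

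First I would treat $\psi(j)=j\,\varphi(j)$. Splitting each shifted index $j+k$ ($k=-2,\ldots,2$) into $j$ plus a constant and collecting, one obtains $L[\psi](j)=j\,L[\varphi](j)+\big(2\varphi(j+2)-\eta\,\varphi(j+1)+\eta\,\varphi(j-1)-2\varphi(j-2)\big)$, whose first term vanishes since $L[\varphi]\equiv0$. For the bracket I would substitute $\varphi(j\pm2)=S\,\varphi(j\pm1)-\varphi(j)$ from \eref{equation: hidden Fibonacci polynomials}, which collapses it to $(2S-\eta)\big(\varphi(j+1)-\varphi(j-1)\big)$; this is zero because $\eta=2S$ whenever $S_1=S_2$. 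Hence $L[\psi]\equiv0$ and $j\,\varphi_l(j)$ is a symmetric Tetranacci polynomial in the degenerate case.

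For $\psi_2(j)=j^2\,\varphi(j)$ the same bookkeeping, now expanding $(j\pm2)^2$ and $(j\pm1)^2$, gives $L[\psi_2](j)=j^2\,L[\varphi](j)+2j\big(2\varphi(j+2)-\eta\,\varphi(j+1)+\eta\,\varphi(j-1)-2\varphi(j-2)\big)+\big(4\varphi(j+2)-\eta\,\varphi(j+1)-\eta\,\varphi(j-1)+4\varphi(j-2)\big)$. The $j^2$ term vanishes as before, and the $j$ linear term equals $2j\,(2S-\eta)\big(\varphi(j+1)-\varphi(j-1)\big)=0$ under $S_1=S_2$. For the constant term I would again reduce $\varphi(j\pm2)$ with \eref{equation: hidden Fibonacci polynomials} and use $\varphi(j+1)+\varphi(j-1)=S\,\varphi(j)$ to obtain $\big(4S^2-8-\eta S\big)\varphi(j)=2\big(S^2-4\big)\varphi(j)$ after inserting $\eta=2S$. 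This vanishes exactly under the extra hypothesis $S_l=\pm2$ (and, for $S_1=S_2$ with $S_l^2\neq4$, it does not, so that hypothesis is genuinely needed), whence $L[\psi_2]\equiv0$ and $j^2\varphi_l(j)$ is then a symmetric Tetranacci polynomial.

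Finally, since $\varphi_l$ — hence $j\,\varphi_l$ and $j^2\varphi_l$ — is well defined for every $j\in\mathbb{Z}$, with the explicit closed form of Proposition \ref{proposition: explicit closed form for varphi12}, its values at $j=-2,\ldots,1$ supply admissible initial data, so $L[\cdot]\equiv0$ is all that is required to match Definition \ref{def.: Tetranacci polynomial}. I expect the only genuine obstacle to be the bookkeeping: isolating $j^k\,L[\varphi](j)$ inside $L[j^k\varphi]$ and then collapsing the lower order-in-$j$ remainders via the two term recursion; once that is done the cancellations are forced by $\eta=2S_l$ and $S_l^2=4$. It is worth recording the conceptual reason as well: the characteristic polynomial of \eref{equation: Tetranacci recursion formula} factors as $(r^2-S_1 r+1)(r^2-S_2 r+1)$, which acquires a double root when $S_1=S_2$ and a quadruple root at $r=\pm1$ when $S_1=S_2=\pm2$, so the solution space must contain $j\,\varphi$ (respectively $j\,\varphi$ and $j^2\varphi$) in addition to $\varphi$ itself.
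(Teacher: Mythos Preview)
Your proof is correct and follows essentially the same approach as the paper: substitute $j\varphi$ (respectively $j^2\varphi$) into the Tetranacci recursion, separate the contributions by powers of $j$, use that $\varphi$ already satisfies the recursion to kill the leading term, and then reduce the remainder via the two-term recursion \eref{equation: hidden Fibonacci polynomials} together with $\eta=2S_l$ (and $S_l^2=4$). Your operator notation $L[\cdot]$ and the closing remark on the factorisation $(r^2-S_1r+1)(r^2-S_2r+1)$ are nice presentational additions, but the underlying computations and cancellations are the same as in the paper.
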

	\begin{proof} The situation of $S_1 = S_2$ implies $\varphi_1(j) = \varphi_2(j)$ for all $j \in \mathbb{Z}$ (cf. Theorem \ref{theorem: hidden Fibonacci polynomials}) and thus; we demonstrate the statement for only $j \varphi_1(j)$ and $j^2 \varphi_1(j)$. After substituting $j \varphi_1(j)$ into \eref{equation: Tetranacci recursion formula}, 
		and reordering, we arrive at
		\begin{eqnarray}
			0 &= j\left\{\varphi_1(j+2)-\zeta \varphi_1(j)+\varphi_1(j-2)-\eta\left[\varphi_1(j+1)+\varphi_1(j-1)\right]\right\}\nonumber\\
			&\qquad + 2\left[\varphi_1(j+2) - \varphi_1(j-2)\right]-\eta \left[\varphi_1(j+1) - \varphi_1(j-1)\right]\nonumber\\
			\label{equation: jvarphi is solution}
			& =2\left[\varphi_1(j+2) - \varphi_1(j-2)\right]-\eta \left[\varphi_1(j+1) - \varphi_1(j-1)\right],
		\end{eqnarray}
		where the curly bracket is identically zero, since $\varphi_1(j)$ is a symmetric Tetranacci polynomial (cf. Theorem \ref{theorem: hidden Fibonacci polynomials}). So far, we have not imposed $S_1 = S_2$. According to \eref{equation: definition S12}, we find $S_1 = \eta/2$ and using \eref{equation: hidden Fibonacci polynomials} twice shows that \eref{equation: jvarphi is solution} is indeed satisfied
		\begin{eqnarray}
			\fl\qquad\qquad 2\left[\varphi_1(j+2) - \varphi_1(j-2)\right]-\eta \left[\varphi_1(j+1) - \varphi_1(j-1)\right]\nonumber\\
			\fl\qquad\qquad = 2 \underbrace{\left[\varphi_1(j+2)\,- S_1\,\varphi_1(j+1)\right] }_{= -\varphi_1(j)}
			+ 2\underbrace{\left[S_1\, \varphi_1(j-1)\,-\varphi_1(j-2)\right]}_{=\varphi_1(j)} =0,
		\end{eqnarray}
		i.e. $j\varphi_1(j)$ is a symmetric Tetranacci polynomial. Notice that the constraint $S_1 = S_2$ is essential, as otherwise only $\eta = S_1+S_2$ is correct. Then, we may write $S_1 =S_2 + \epsilon $ ($\epsilon \neq 0$) and \eref{equation: jvarphi is solution} is not satisfied.
		
		Next, in order for $j^2\varphi_1(j)$ to obey \eref{equation: jvarphi is solution} the additional constraint $S_1^2 = 4$ is mandatory. Substituting $j^2\varphi_1(j)$ into \eref{equation: Tetranacci recursion formula} and reordering the terms afterwards grants
		\begin{eqnarray}
			0\,&=\,j^2\left\{\varphi_1(j+2)-\zeta\varphi_1(j)+\varphi(j-2)-\eta \left[\varphi_1(j-1)+\varphi_1(j+1)\right]\right\}\nonumber\\
			&\qquad + 2j\left\{2\left[\varphi_1(j+2) - \varphi_1(j-2)\right]-\eta \left[\varphi_1(j+1) - \varphi_1(j-1)\right]\right\}  	\nonumber\\
			&\qquad + 4\,\varphi_1(j+2) + 4\, \varphi_1(j-2)-\eta\,\left[\varphi_1(j+1)+\varphi_1(j-1)\right]\nonumber\\
			&= 4\,\varphi_1(j+2) + 4\, \varphi_1(j-2)-\eta\,\left[\varphi_1(j+1)+\varphi_1(j-1)\right],
		\end{eqnarray}
		where the first two lines drop since $\varphi_1(j)$, $j\varphi_1(j)$ are symmetric Tetranacci polynomials. Due to \eref{equation: hidden Fibonacci polynomials} and $S_1 =\eta/2$, we find
		\begin{eqnarray}
			&4\,\varphi_1(j+2) + 4\, \varphi_1(j-2)-\eta\,\left[\varphi_1(j+1)+\varphi_1(j-1)\right]\nonumber\\ 
			&= 4\left[S_1\,\varphi_1(j+1)-\varphi_1(j)\right]
			+ 4\left[S_1\,\varphi_1(j-1)-\varphi_1(j)\right]
			-\eta\,S_1\varphi_1(j)\nonumber\\
			& = 2 \left[2 S_1^2-S_1\frac{\eta}{2}-4 \right]\varphi_1(j)\nonumber\\
			& = 2 \left[S_1^2-4\right]\varphi_1(j),
		\end{eqnarray}
		being zero only for $S_1^2=4$ at generic $j\in \mathbb{Z}$.  \hfill $\square$
	\end{proof}
	Next, we construct $\mathcal{T}_{-2}$ in terms of $\varphi_{1,2}(j)$ and $j\varphi_{1,2}(j)$, $j^2\varphi_{1,2}(j)$ when the proper conditions are met.
	\begin{theorem}\label{theorem: Fibonacci decomposition of T-2}
		The closed form expression of $\mathcal{T}_{-2}(j)$ is 	
		\vspace{0.1cm}
		\begin{eqnarray}\label{equation: closed form expression of T-2}
			\mathcal{T}_{-2}(j)  = \left\{
			\matrix{
				\frac{\varphi_2(j)- \varphi_1(j)}{S_1-S_2}, & S_1\neq S_2\cr
				\frac{(1-j)\,\varphi_1(j+1)\,+\, (1+j)\,\varphi_1(j-1)}{S_1^2-4}, & S_1 = S_2, S_1^2\neq 4\cr
				\frac{S_1\,(1-j^2)}{12} \,\varphi_1(j), & S_1 = S_2, S_1^2 =  4}
			\right. .
		\end{eqnarray}
	\end{theorem}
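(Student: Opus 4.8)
The plan is to use that a symmetric Tetranacci polynomial is uniquely pinned down by any four consecutive values, together with the linearity of \eref{equation: Tetranacci recursion formula}. Since \eref{equation: Tetranacci recursion formula} expresses $\xi_{j+2}$ through $\xi_{j+1},\xi_j,\xi_{j-1},\xi_{j-2}$ and can equally be solved for $\xi_{j-2}$, knowing $\xi_j$ at $j=-2,-1,0,1$ determines $\xi_j$ for all $j\in\mathbb{Z}$. Hence in each of the three regimes I would (i) check that the right-hand side of \eref{equation: closed form expression of T-2} is a symmetric Tetranacci polynomial, and (ii) verify that it equals $\delta_{i,-2}$ at $i=-2,-1,0,1$; by the selective property \eref{equation: selective property of the basic tetranacci polynomials} this identifies it with $\mathcal{T}_{-2}(j)$ for every $j$.

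For step (i): when $S_1\neq S_2$, $\varphi_1(j)$ and $\varphi_2(j)$ are symmetric Tetranacci polynomials by Theorem \ref{theorem: hidden Fibonacci polynomials}, so $[\varphi_2(j)-\varphi_1(j)]/(S_1-S_2)$ is one too, the denominator being nonzero by hypothesis. When $S_1=S_2$ with $S_1^2\neq4$, I would first note that \eref{equation: Tetranacci recursion formula} is invariant under $j\mapsto j+k$, so the shifts $\varphi_1(j\pm1)$ are symmetric Tetranacci polynomials, and so are $(j\pm1)\varphi_1(j\pm1)$ as shifts of $j\varphi_1(j)$, which is a symmetric Tetranacci polynomial by Lemma \ref{lemma: jvarphi symmetric Tetranacci for S1  =S2}; writing $(1-j)\varphi_1(j+1)+(1+j)\varphi_1(j-1)$ as a $\mathbb{C}$-linear combination of these settles it, and division by $S_1^2-4\neq0$ is harmless. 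When $S_1=S_2$ and $S_1^2=4$, both $\varphi_1(j)$ and $j^2\varphi_1(j)$ are symmetric Tetranacci polynomials (Theorem \ref{theorem: hidden Fibonacci polynomials} and Lemma \ref{lemma: jvarphi symmetric Tetranacci for S1  =S2}), hence so is $S_1(1-j^2)\varphi_1(j)/12$.

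For step (ii) the verification is routine, and I would read it off \Tref{table: some values of varphi12}, using one step of \eref{equation: hidden Fibonacci polynomials} or Proposition \ref{proposition: varphi is odd in j} whenever an argument lies outside the tabulated window. In the first regime $\varphi_l(-2)=-S_l$, $\varphi_l(-1)=-1$, $\varphi_l(0)=0$, $\varphi_l(1)=1$ give the values $1,0,0,0$; in the second regime one additionally needs $\varphi_1(-3)=-(S_1^2-1)$, so that the numerator collapses to $S_1^2-4$ at $j=-2$ and to $0$ at $j=-1,0,1$; in the third regime the factor $1-j^2$ or the factor $\varphi_1$ annihilates the expression at $j=-1,0,1$, while at $j=-2$ it equals $\frac{S_1}{12}(1-4)(-S_1)=\frac{S_1^2}{4}=1$ since $S_1^2=4$. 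These are the elementary computations I would not belabour.

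I expect the only genuinely delicate points to be, first, establishing that the building blocks of the two degenerate branches are truly symmetric Tetranacci polynomials — this is precisely what Lemma \ref{lemma: jvarphi symmetric Tetranacci for S1  =S2} was set up for, supplemented by the shift-invariance of the recursion to cover $\varphi_1(j\pm1)$ — and second, confirming that the denominators $S_1-S_2$ and $S_1^2-4$ do not vanish, which is exactly the case distinction in \eref{equation: closed form expression of T-2}. As an alternative for the non-degenerate branch, one could instead start from \eref{equation: generating function} with $g_{-2}=1$, $g_{-1}=g_0=g_1=0$, factor the denominator as $(1-S_1t+t^2)(1-S_2t+t^2)$, and identify $-t^2/[(1-S_1t+t^2)(1-S_2t+t^2)]$ with $\frac{1}{S_1-S_2}\left(\frac{t}{1-S_2t+t^2}-\frac{t}{1-S_1t+t^2}\right)$, the generating function of $[\varphi_2(j)-\varphi_1(j)]/(S_1-S_2)$; but this yields the identity only for $j\ge0$ and one still has to run the recursion backwards to reach $j<0$, so I would prefer the uniform four-point argument in all three cases.
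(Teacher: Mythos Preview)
Your proposal is correct and follows essentially the same approach as the paper: in each regime you show the right-hand side is a symmetric Tetranacci polynomial (via Theorem \ref{theorem: hidden Fibonacci polynomials}, Lemma \ref{lemma: jvarphi symmetric Tetranacci for S1  =S2}, linearity and shift-invariance of \eref{equation: Tetranacci recursion formula}) and then verify the selective property at $j=-2,-1,0,1$ from \Tref{table: some values of varphi12}. The generating-function alternative you sketch for the non-degenerate branch is not in the paper, but since you rightly discard it in favour of the uniform argument, the two proofs coincide.
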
\vspace{0.1cm}
	\begin{proof}
		For $S_1\neq S_2$, we have that $\varphi_{1,2}(j)$ satisfy \eref{equation: Tetranacci recursion formula}, while $j\varphi_{1,2}(j)$, $j^2\varphi_{1,2}(j)$ do not (cf. Theorem \ref{theorem: hidden Fibonacci polynomials} and Lemma \ref{lemma: jvarphi symmetric Tetranacci for S1  =S2}). Since $S_1\neq S_2$ also implies $\varphi_1(j)\neq \varphi_2(j)$ (cf. \eref{equation: hidden Fibonacci polynomials}) for all $j \in \mathbb{Z}\setminus\{0,\,1\}$ and due to the linearity of \eref{equation: Tetranacci recursion formula},  $[\varphi_2(j)- \varphi_1(j)]/(S_1-S_2)$ is a non-trivial solution to \eref{equation: Tetranacci recursion formula}. Hence, the statement is correct provided that $[\varphi_2(j)- \varphi_1(j)]/(S_1-S_2)$ inherits the selective property of $\mathcal{T}_{-2}(j)$. Using \Tref{table: some values of varphi12}, we have indeed that
		\numparts
		\begin{eqnarray}
		\fl \qquad	j=-2:&  \quad \frac{\varphi_2(-2)- \varphi_1(-2)}{S_1-S_2}\,=\,\frac{-S_2-(-S_1)}{S_1-S_2} = 1\equiv \mathcal{T}_{-2}(-2),\\
		\fl \qquad	j=-1:&  \quad \frac{\varphi_2(-1)- \varphi_1(-1)}{S_1-S_2}\,=\,\frac{-1-(-1)}{S_1-S_2} = 0\equiv \mathcal{T}_{-2}(-1),\\
		\fl\qquad	j=0:&  \quad \frac{\varphi_2(0)- \varphi_1(0)}{S_1-S_2}\,=\,\frac{0-0}{S_1-S_2} = 0\equiv \mathcal{T}_{-2}(0),\\
		\fl\qquad	j=1:&  \quad \frac{\varphi_2(1)- \varphi_1(1)}{S_1-S_2}\,=\,\frac{1-1}{S_1-S_2} = 0\equiv \mathcal{T}_{-2}(1),
		\end{eqnarray}
		\endnumparts
		holds true. Turning to the case of $S_1=S_2$ but $S_1^2\neq 4$,
		we first find $\varphi_1(j) = \varphi_2(j)$ (cf. \eref{equation: hidden Fibonacci polynomials}) for all $j\in \mathbb{Z}$, but also $j\varphi_1(j)$ satisfies now \eref{equation: Tetranacci recursion formula} due to Lemma \ref{lemma: jvarphi symmetric Tetranacci for S1  =S2}. Since $\varphi_1(j\pm1 )$, $(1\pm j)\varphi_1(j\pm1 )$ are apparent solutions of \eref{equation: Tetranacci recursion formula}, we find that
		\begin{eqnarray}
		\fl \qquad	(1-j)\,\varphi_1(j+1)\,+\, (1+j)\,\varphi_1(j-1) &= 
			2\,\varphi_1(j+1)- (j+1)\, \varphi_1(j+1)\nonumber \\
			&\quad +2\,\varphi_1(j-1)\,+ (j-1)\, \varphi_1(j-1)
		\end{eqnarray}
		is one too. In addition, we have (cf. \Tref{table: some values of varphi12})
		\numparts
		\begin{eqnarray}
		\fl \qquad	j=-2:& \quad \frac{3\,\varphi_1(-1)\,-\,\varphi_1(-3)}{S_1^2-4} = \frac{-3+(S_1^2-1)}{S_1^2-4} = 1 \equiv \mathcal{T}_{-2}(-2),\\ 
		\fl \qquad	j=-1:& \quad \frac{2\,\varphi_1(0)\,+\,0 \,\varphi_1(-2)}{S_1^2-4} = \frac{0+0}{S_1^2-4} = 0 \equiv \mathcal{T}_{-2}(-1),\\
		\fl \qquad	j=0: & \quad \frac{\varphi_1(1)\,+\,\varphi_1(-1)}{S_1^2-4} = \frac{1-1}{S_1^2-4} = 0 \equiv \mathcal{T}_{-2}(0),\\
		\fl \qquad	j=1: & \quad \frac{0\, \varphi_1(2)\,+\,2\,\varphi_1(0)}{S_1^2-4} = \frac{0+0}{S_1^2-4} = 0 \equiv \mathcal{T}_{-2}(1),
		\end{eqnarray} 
		\endnumparts
		i.e. $\mathcal{T}_{-2}(j)$ is properly constructed. Finally, in case of $S_1=S_2$ and $S_1^2=4$ also $j^2\varphi_1(j)$ satisfies \eref{equation: Tetranacci recursion formula} (Lemma \ref{lemma: jvarphi symmetric Tetranacci for S1  =S2}). Apparently, $(1-j^2)\varphi_1(j)$ vanishes at $j = \pm 1$ and also for $j=0$ due to $\varphi_1(0)=0$. For $j=-2$, we find
		\begin{eqnarray}
			\frac{S_1\,(1-4)}{12} \,\varphi_1(-2) = \frac{3S_1^2}{12} = 1 \equiv \mathcal{T}_{-2}(-2),
		\end{eqnarray}
		and the statement is correct.\hfill $\square$
	\end{proof}
	Since the expression for $\mathcal{T}_{-2}(j)$ is known to us, we can next construct the remaining basic Tetranacci polynomials by applying the Lemmata \ref{lemma: inversion relation of Tij}, \ref{lemma: further interconnection of the basic Tetranacci polynomials}.
	\begin{prop}\label{proposition: closed formula T-1 for S1 not S2}
		For generic integer $j$, we have
		\numparts
		\begin{eqnarray}\label{equation: formula T-1 for S1 not S2}
			\mathcal{T}_{-1}(j)\,&=\,\frac{\varphi_1(j+1)-\varphi_2(j+1) +S_2\,\varphi_{1}(j) -S_1\,\varphi_{2}(j)}{S_1-S_2},\\
			\mathcal{T}_{0}(j)\,&=\,\frac{S_1\,\varphi_2(j+1) -S_2\,\varphi_1(j+1)+\varphi_{2}(j) -\varphi_{1}(j)}{S_1-S_2},\\
			\label{equation: formula T1 for S1 not S2}
			\mathcal{T}_{1}(j)\,&=\,\frac{\varphi_1(j+1) -\varphi_2(j+1)}{S_1-S_2},
		\end{eqnarray}
		\endnumparts
		supposing here $S_1\neq S_2$. The results for $S_1 = S_2$ are presented in \ref{appendix: Formulae of T-1,.., T_1 for degenerate roots}.
	\end{prop}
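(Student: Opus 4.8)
The plan is to obtain all three formulae directly from the closed form of $\mathcal{T}_{-2}(j)$ in Theorem \ref{theorem: Fibonacci decomposition of T-2} by feeding it into the interconnection relations of Lemma \ref{lemma: further interconnection of the basic Tetranacci polynomials}, namely \eref{equation: T-1j in terms of T-2}, \eref{equation: T0j in terms of T-2} and \eref{equation: T1j in terms of T-2}. For $S_1\neq S_2$ the theorem gives $\mathcal{T}_{-2}(j) = [\varphi_2(j)-\varphi_1(j)]/(S_1-S_2)$, so \eref{equation: T1j in terms of T-2} yields at once $\mathcal{T}_1(j) = -\mathcal{T}_{-2}(j+1) = [\varphi_1(j+1)-\varphi_2(j+1)]/(S_1-S_2)$, which is exactly \eref{equation: formula T1 for S1 not S2}.

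For $\mathcal{T}_{-1}$ and $\mathcal{T}_0$ the only additional ingredient is the identity $\eta = S_1+S_2$, which is Vieta's relation for the quadratic with roots $S_{1,2}$ (cf. \eref{equation: definition S12}). First I would insert $\mathcal{T}_{-2}$ into \eref{equation: T-1j in terms of T-2}, giving $\mathcal{T}_{-1}(j)\,(S_1-S_2) = \varphi_2(j-1)-\varphi_1(j-1) - (S_1+S_2)[\varphi_2(j)-\varphi_1(j)]$, and then eliminate the index $j-1$ via \eref{equation: hidden Fibonacci polynomials} in the form $\varphi_l(j-1) = S_l\varphi_l(j)-\varphi_l(j+1)$. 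The factors $S_l$ produced by the recursion partially cancel against the $S_l$ coming from $\eta$; collecting what survives leaves precisely \eref{equation: formula T-1 for S1 not S2}. The computation for $\mathcal{T}_0$ is the mirror image: substitute $\mathcal{T}_{-2}$ into \eref{equation: T0j in terms of T-2} and remove the index $j+2$ using $\varphi_l(j+2) = S_l\varphi_l(j+1)-\varphi_l(j)$, then collect terms by the argument $j$ and $j+1$.

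There is no genuine obstacle here: every step is a linear substitution followed by a single application of the two-term recursion, so the one thing to watch is the bookkeeping of the various sign and shifted-index combinations, which is why I would organise the algebra around the single identity $\eta = S_1+S_2$ rather than carrying $\eta$ through unresolved. As an independent consistency check one could instead verify directly that each right-hand side is a symmetric Tetranacci polynomial (being a linear combination of the solutions $\varphi_{1,2}(j)$ and $\varphi_{1,2}(j\pm 1)$ of \eref{equation: Tetranacci recursion formula}) and that it reproduces the Kronecker-delta initial values \eref{equation: selective property of the basic tetranacci polynomials} for $j=-2,\ldots,1$ by reading off \Tref{table: some values of varphi12}, exactly as was done for $\mathcal{T}_{-2}$ in Theorem \ref{theorem: Fibonacci decomposition of T-2}; but the route through Lemma \ref{lemma: further interconnection of the basic Tetranacci polynomials} is shorter. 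The analogous closed forms for the degenerate case $S_1 = S_2$ collected in \ref{appendix: Formulae of T-1,.., T_1 for degenerate roots} would be derived in the same way, starting instead from the second and third branches of \eref{equation: closed form expression of T-2}.
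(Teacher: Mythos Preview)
Your proposal is correct and follows essentially the same route as the paper: substitute the $S_1\neq S_2$ branch of \eref{equation: closed form expression of T-2} into the interconnection relations \eref{equation: T-1j in terms of T-2}--\eref{equation: T1j in terms of T-2} of Lemma~\ref{lemma: further interconnection of the basic Tetranacci polynomials}, then simplify using $\eta=S_1+S_2$ and the two-term recursion \eref{equation: hidden Fibonacci polynomials}. The paper also notes, as you do, the alternative of verifying the selective property directly.
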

	\begin{proof}
		The displayed formulae follow directly by substituting \eref{equation: closed form expression of T-2} into the relations from Lemmata \ref{lemma: inversion relation of Tij}, \ref{lemma: further interconnection of the basic Tetranacci polynomials} and exploiting the properties of $\varphi_{1,2}(j)$ drawn in Proposition \ref{proposition: varphi is odd in j} and Theorem \ref{theorem: hidden Fibonacci polynomials}. Alternatively, the \eref{equation: formula T-1 for S1 not S2} - \eref{equation: formula T1 for S1 not S2} are apparently linear combinations of solutions to the recursion formula in \eref{equation: Tetranacci recursion formula} and one is left to demonstrate the respective selective property, which we delegate as exercise to the reader.\hfill $\square$
	\end{proof}
	In the beginning of this manuscript, we promised to provide a rather simple closed form expression for $\xi_j$. On first glance of $\mathcal{T}_i(j)$ ($i=-2,\,\ldots,\,1$) in Theorem \ref{theorem: Fibonacci decomposition of T-2} and Proposition \ref{proposition: closed formula T-1 for S1 not S2}, this seems wrong. However, substituting the basic Tetranacci polynomials into \eref{corollary: closed form for xij} yields ($j\in \mathbb{Z}$)
	\begin{eqnarray}\label{equation: final closed form for xi_j for S1 not S2}
		\xi_j\,&=\,\varphi_2(j)\,\frac{g_{-2}-S_1\,g_{-1}+ g_0}{S_1-S_2}- \varphi_1(j)\,\frac{g_{-2}-S_2\,g_{-1}+ g_0}{S_1-S_2}\nonumber\\
		& \quad + \varphi_1(j+1)\,\frac{g_{-1} - S_2\, g_0+g_1}{S_1-S_2} -\varphi_2(j+1)\,\frac{g_{-1} - S_1\, g_0+g_1}{S_1-S_2}
	\end{eqnarray}
	in case of $S_1\neq S_2$. Similar expressions can be anticipated also for $S_1 = S_2$, $S_1^2 \neq 4$ and $S_1 = S_2$, $S_1^2=4$ from \ref{appendix: Formulae of T-1,.., T_1 for degenerate roots}. In view of Theorem \ref{theorem: Fibonacci decomposition of T-2} and Lemma \ref{lemma: further interconnection of the basic Tetranacci polynomials}, we demonstrated explicitly the decomposition of a generic symmetric Tetranacci polynomial $\xi_j$ in terms of the generalized Fibonacci polynomials $\varphi_{1,2}$.
	
	The substitution $\varphi_{1,2}(j) = \sin(\theta_{1,2} j)/\sin(\theta_{1,2})$ ($S_{1,2}\neq \pm 2$, cf. Proposition \ref{proposition: explicit closed form for varphi12}) shows that $\xi_j$ can be seen as combination of standing waves. Particularly, we replace $\theta_{1,2} = k_{1,2} d$ in terms of wavevectors $k_{1,2}$ for physical models with lattice constant $d$. In case a boundary condition is applied, the values of $k_{1,2}$ become quantized accordingly \cite{Leumer2020}. 
	
	However, these wavevectors are generally complex. The reason is that $r_{\pm l} = \exp(\pm\rmi \theta_{l})$ ($l = 1,2$) is the polar form of $r_{\pm l}$ since $\theta_{l} = R_l+\rmi I_l$ ($R_l,\,I_l \in \mathbb{R}$) yields $r_{\pm l} = \vert r_{\pm l} \vert \,\exp(\pm \rmi R_l)$ with $\vert r_{\pm l} \vert = \exp(\mp I_l)$. Importantly, the potential complex nature of $k_{1,2}$ is independent of the model's topological classification and do not necessarily corresponds to edge modes. We verify the statement in \sref{subsection: Tetranacci arrow} below. 
	
	In the context of a tight binding model, we discuss the potential of symmetric Tetranacci polynomials in physics. During our in-depth analysis we also show current limitations of the approach as for instance, unknown non-linear identities (if existent) of $\mathcal{T}_i(j)$. 
	\section{Lattice model with nearest and next nearest neighbor hopping}
	\label{section: lattice model}
	The simplest physical model featuring Tetranacci polynomials as defined in \eref{equation: Tetranacci recursion formula} is a tight binding chain of atoms interconnected by nearest and next nearest neighbor hopping. In terms of spinless fermionic creation/ annihilation operators $c^\dagger_j$, $c_j$, the one-dimensional lattice Hamiltonian reads ($\mu,\,t_{1,2} \in \mathbb{R}$) 
	\begin{eqnarray}\label{equation: nnn chain, lattice Hamiltonian}
		\hat{H}=-\mu \sum\limits_{j=1}^N c_j^\dagger c_j - \left[\sum\limits_{n = 1}^2t_n\sum\limits_{j=1}^{N-n}  (c_{j+n}^\dagger c_j + c_j^\dagger c_{j+n}) \right]
	\end{eqnarray}
	where $\mu$ denotes an onsite energy and ($t_2$) $t_1$ abbreviates (next) nearest neighbor coupling. Figure \ref{Fig1} provides a sketch of the model for six atoms. Contrary to physical intuition, we also allow for $\vert t_2 \vert >\vert t_1 \vert$. Later in \sref{section: Engineering effective next nearest neighbor coupling}, we discuss how effective $t_2$, $t_1$ couplings of arbitrary ratio may be engineered in experiments. 
	\subsection{General approach to the spectrum}
	On the assumptions of both open boundary conditions and finite length, direct diagonalization methods can be applied on \eref{equation: nnn chain, lattice Hamiltonian}. The fermionic field $\hat{\Psi} = (c_1,\ldots,\,c_N)$, $\hat{\Psi}^\dagger = (c_1^\dagger,\ldots,\,c_N^\dagger)^\mathrm{T}$ yields the hermitian Toeplitz matrix
	\begin{eqnarray}\label{equation: nnn chain matrix}
		\mathcal{H} = \left[\matrix{%
			-\mu & -t_1 & -t_2\cr	
		-t_1 & -\mu & -t_1 & -t_2\cr	
		-t_2 & -t_1 & -\mu & -t_1 & -t_2\cr	
	&	\ddots &	\ddots &	\ddots &	\ddots &	\ddots \cr
	&&-t_2 & -t_1 & -\mu & -t_1 & -t_2\cr	
	&	&&-t_2 & -t_1 & -\mu & -t_1 \cr	
	&&	&&-t_2 & -t_1 & -\mu &
		}ß\hspace{-0.3cm}\right]_{N\times N}
	\end{eqnarray}
	satisfying $\hat{H} = \hat{\Psi}^\dagger\mathcal{H} \hat{\Psi}$. Please notice that $\mathcal{H}$ recovers $M(\alpha = 0,\,\beta = 0)$ from \eref{equation: Toeplitz like matrix}. On first glance an eigenvector $\vec{\psi}_E = \left(\xi_1,\ldots,\,\xi_N\right)$ of \eref{equation: nnn chain matrix} to eigenvalue $E \in \mathbb{R}$ obeys ($j = 3,\ldots,\,N-2$)
	\begin{eqnarray}\label{eigenvector recursion formula}
		0=(E+\mu) \,\xi_j + t_1\,(\xi_{j+1} +\xi_{j-1}) +t_2 \,(\xi_{j+2} +\xi_{j-2}),
	\end{eqnarray}
	and four boundary conditions 
	\numparts
	\begin{eqnarray}\label{equation: nnc. bound1}
		0=&	(E+\mu) \,\xi_1 + t_1\,\xi_{2}  +t_2 \,\xi_3,\\
		0=&	(E+\mu) \,\xi_2 + t_1\,(\xi_{3}+\xi_{1}) +t_2 \,\xi_4,\\
		0=&	(E+\mu) \,\xi_{N-1} + t_1\,(\xi_{N-2} + \xi_{N})+t_2\,\xi_{N-3},\\
		\label{equation: nnc. bound4}
	0=&	(E+\mu) \,\xi_{N} + t_1\,\xi_{N-1}+t_2\,\xi_{N-2}.
	\end{eqnarray}
	\endnumparts
	Since the latter four mimic \eref{eigenvector recursion formula} apart from missing terms, we define $\xi_0,\xi_{-1},\xi_{-2}\ldots$, $\xi_N,\xi_{N+1},\xi_{N+2}\ldots$ as the continuation of $\xi_1,\ldots,\,\xi_N$ while the eigenvector $\vec{\psi}_E$ remains untouched. Thus, we establish the Tetranacci recursion formula ($j\in \mathbb{Z}$)
	\begin{eqnarray}\label{equation: Tetranacci recursion nnn chain}
		\xi_{j+1}  = \zeta\,\xi_j -\xi_{j-2} + \eta\, \left(\xi_{j+1}+\xi_{j-1}\right),\quad \zeta = -\frac{E+\mu}{t_2},\,\eta = -\frac{t_1}{t_2}
	\end{eqnarray}
	and the boundary conditions reduce to $\xi_{-1} = \xi_0 = \xi_{N+1} = \xi_{N+2} = 0$ as expected from the physical perspective. Please notice though that we aim mainly on the situation of $t_2\neq 0$, as otherwise \eref{equation: nnn chain matrix} becomes tridiagonal corresponding to Fibonacci polynomials \cite{Kouachi2006, daFonseca2019}. 
	\begin{figure}[t] \centering
		\includegraphics[width = 0.65 \textwidth]{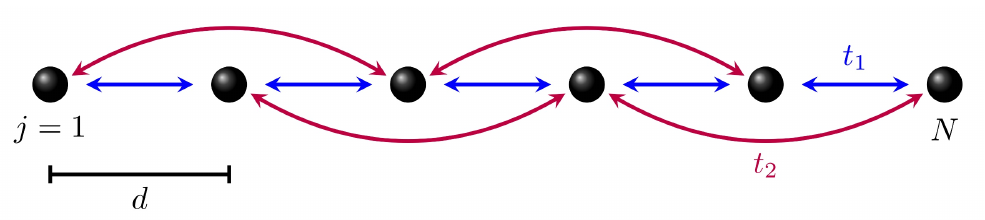}\vspace{-0.2cm}
		\caption{Atomic chain with nearest $t_1$ and next nearest neibhor hopping $t_2$.}
		\label{Fig1}
	\end{figure}

	Next, we provide the connection between \eref{equation: Tetranacci recursion nnn chain} and physical quantities such as the bulk dispersion relation and wave vectors $k_{1,2}$. Applying Lemma \ref{lemma: exponential form for tetranaccis} and renaming $\theta_{1,2} \rightarrow k_{1,2} d$ ($d$ is the inter atomic distance) sets $\xi_j = A \exp(\mathrm{i}k_1d_j) +B \exp(-\mathrm{i}k_1d_j) + C \exp(\mathrm{i}k_2d_j) +D \exp(-\mathrm{i}k_2d_j)$ in terms of two right and two left moving plane waves. Naturally, $k_{1,2}$ are related to the eigenvalue $E\equiv E(k_{1,2})$ and $E(k_1)  =E(-k_1) = E(k_2)=E(-k_2)$ has to be true. The key relations to remind are i) $S_{1,2} =r_{\pm 1,2} +r_{\pm 1,2}^{-1} $ and ii) $S^2_{1,2}-\eta S_{1,2}-\zeta -2 = 0$ relating $k_{1,2}$ and $E$ (kept inside $\zeta$). Indeed, solving for $E$ yields 
	\begin{eqnarray}\label{equation: nnn chain dispersion relation}
		E(k) = -\mu -2t_1\cos(kd)-2t_2\cos(2kd)
	\end{eqnarray}
 	the bulk dispersion relation at $k=k_{1,2}$ \cite{Leumerthesis}. Usually, \eref{equation: nnn chain dispersion relation} is found from \eref{equation: nnn chain, lattice Hamiltonian} for $N\rightarrow\infty$ and periodic boundary conditions after a Fourier analysis. We are left to demonstrate that $E(\pm k_1) = E(\pm k_2)$ is true. From \eref{equation: definition S12}, we find $S_1+S_2 = \eta$ granting $S^2_{1} -\eta S_1  = S_1^2- (S_1+S_2)S_1 = -S_1S_2 = S_2^2- (S_1+S_2)S_2 = S^2_{2} -\eta S_2$. In turn, ii) implies $\zeta = S_1^2-\eta S_1 -2 = S_2^2-\eta S_2 -2 $; thus we have $E(\pm k_1) = E(\pm k_2)$. Since the parameters $\mu$, $t_{1,2}$ and $E$ have the physical dimension of an energy, we refer to $S_1+S_2 = \eta$ as the equal energy constraint. Written explicitly, we have
	\begin{eqnarray}\label{equation: equal energy constraint}
		\cos(k_1d) + \cos(k_2d) = -\frac{t_1}{2t_2}.
	\end{eqnarray}
	By \eref{equation: nnn chain dispersion relation}, \eref{equation: equal energy constraint}, the eigenvalues $E$ demand the determination of $k_{1,2}$. Imposing the open boundary conditions on $\xi_j = A \exp(\mathrm{i}k_1d_j) +B \exp(-\mathrm{i}k_1d_j) + C \exp(\mathrm{i}k_2d_j) +D \exp(-\mathrm{i}k_2d_j)$ grants a homogeneous $4\times 4$ system of equations $M\vec{x} =0$ in $\vec{x} = (A,\,B,\,C,\,D)^\mathrm{T}$. Since $\vec{x} = \vec{0}$ yields $\vec{\psi}_E= \vec{0}$, $M$ is singular. After some algebra, we find ($2k_\pm = k_1\pm k_2$)
	\begin{eqnarray}\label{equation: quantization contraint nnn chain}
		\frac{\sin^2\left[k_+d\left(N+2\right)\right]}{\sin^2(k_+d)} = \frac{\sin^2\left[k_-d\left(N+2\right)\right]}{\sin^2(k_-d)}
	\end{eqnarray}
	setting the discrete (quantized) values of $k_{1,2}$. Notice that \eref{equation: quantization contraint nnn chain} has to be solved together with \eref{equation: equal energy constraint}. Thus, $k_{1,2}$ depends on $t_1/t_2$. Due to the matrix size, we expect $N$ pairs $(k_1,\,k_2)$ satisfying \eref{equation: quantization contraint nnn chain} and without loss of generality, their real part is restrained to the first Brillouin zone, i.e. $-\pi/d \le \mathrm{Re}(k_{1,2}) \le \pi/d$, due to $2\pi$-periodicity of \eref{equation: equal energy constraint}, \eref{equation: quantization contraint nnn chain}.
	\begin{figure}[t] \centering
		\includegraphics[width = 0.7 \textwidth]{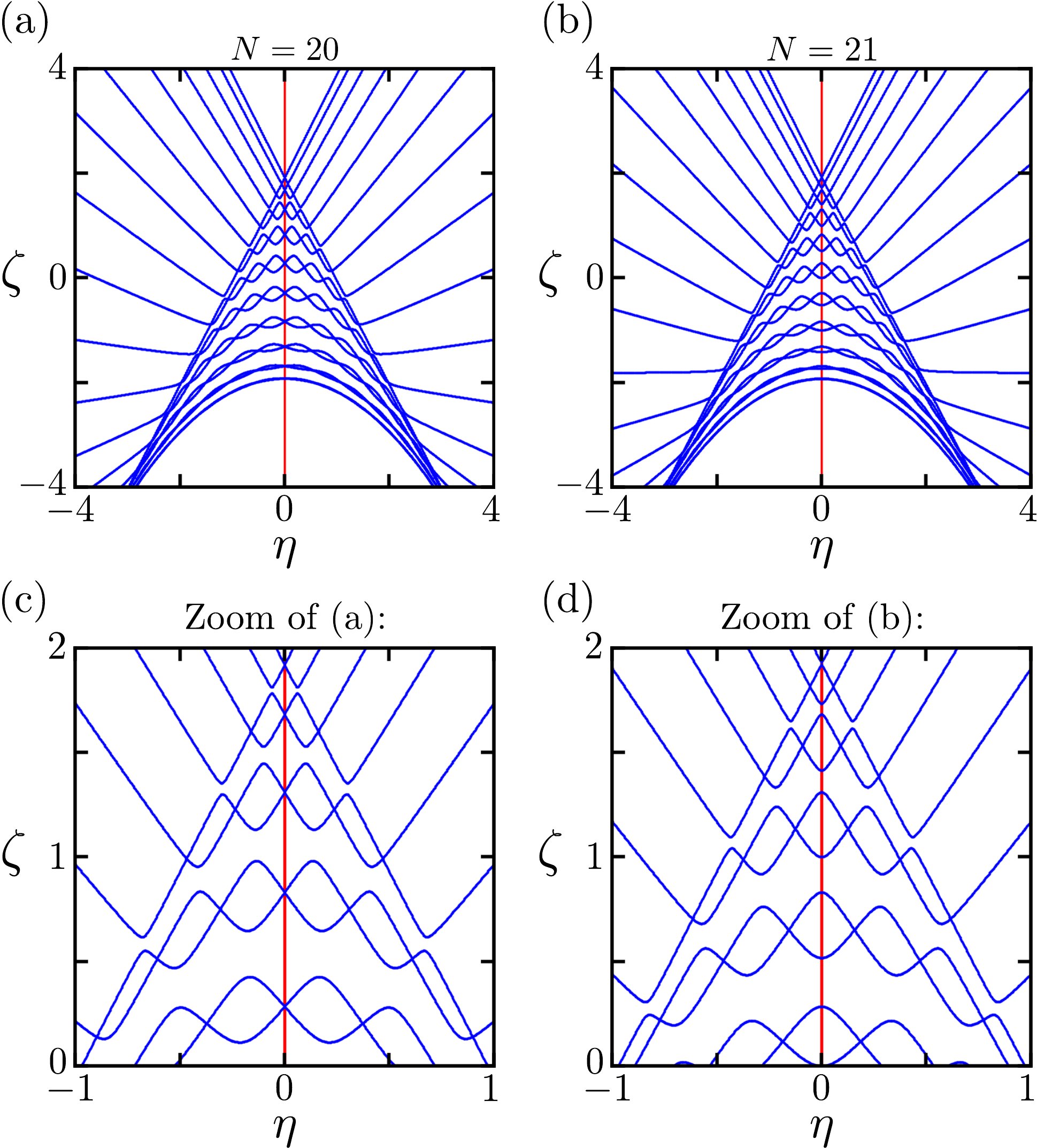}\vspace{-0.2cm}
		\caption{Numerical eigenvalues of \eref{equation: nnn chain matrix} in terms of $\zeta$, $\eta$ for $N = 20$ ($N = 21$) in (a) ((b)). (a) The spectrum features several degeneracies associated to line crossings at specific parameter values. (b) For $N$ odd, no degenerate eigenvalues appear at $\eta = 0$ (red vertical line). Generally, the spectrum is symmetric in $\eta$. (c) ((d)) Zoom of (a) ((b)).}
		\label{Fig2}
	\end{figure}

	Unfortunately though, the transcendental character of \eref{equation: quantization contraint nnn chain} denies further progress apart from limiting cases. The appearance of $N+2$ pronounces the next nearest neighbor character of the model and naive expectations as $k = n\pi/(N+1)$ ($n = 1,\ldots,\, N$) clearly fail. Even worse, $t_{1,2}$ have a strong influence. For simplicity we discuss the situation of $t_1 = 0$, when only next nearest neighbor hopping remains. Then, the Hamiltonian \eref{equation: nnn chain, lattice Hamiltonian} possess two separated sublattices consisting of only even (odd) $j$ such that \eref{equation: nnn chain matrix} becomes block diagonal in the associated basis. Each block itself is tridiagonal and mimics a nearest neighbor chain with hopping constant $t_2$. For $N$ even, each eigenvalue is twice degenerated
	\begin{eqnarray}\label{equation: nnnc t_1 = 0, N even}
		E = -\mu-2t_2 \cos \left(\frac{2n\pi}{N+2}\right), \quad n = 1,\ldots, N/2\quad 
	\end{eqnarray}
	while $N$ odd gives
	\begin{eqnarray}\label{equation: nnnc t_1 = 0, N odd1}
		E &= -\mu-2t_2 \cos \left(\frac{2n\pi}{N+3}\right), \quad n = 1,\ldots (N+1)/2\\
		\label{equation: nnnc t_1 = 0, N odd2}
		E &= -\mu-2t_2 \cos \left(\frac{2n\pi}{N+1}\right), \quad n = 1,\ldots (N-1)/2
	\end{eqnarray}
	and the general dependence in $N$ is non trivial \cite{Leumerthesis}. In \eref{equation: nnnc t_1 = 0, N even} - \eref{equation: nnnc t_1 = 0, N odd2}, we extract $k_1d$ from the cosine functions. We have $k_2 = k_1 -\pi/d$ corresponding to $k_+ = k_1 + \pi/(2d)$, $k_-=  \pi/(2d)$ and both \eref{equation: equal energy constraint} \eref{equation: quantization contraint nnn chain} are satisfied. In Figure \ref{Fig2}, the solutions from \eref{equation: nnnc t_1 = 0, N even} mark the line closings on the vertical $\eta = 0$ axis.  
	
	Next, we turn to degenerate eigenvalues. As can be seen from Figure \ref{Fig2}, they exist only for well defined ratios $t_1/t_2$. We shall derive the exact conditions (cf. \eref{equation: crossing conditions} below) and proof that only twofold degeneracies exist.
	\subsection{Degenerate eigenvalues}
	Initially, we assume that the degeneracy is $D$-fold associated to linear independent eigenvectors $\vec{\psi}_E^{(d)} = (\xi^{(d)}_1,\ldots,\xi^{(d)}_N)$ ($d = 1,\ldots,\,D\ge 2$) of \eref{equation: nnn chain matrix}. From Corollary \eref{corollary: closed form for xij}, we have $\xi^{(d)}_j = \sum_{i=-2}^1\,g_i^{(d)}\,\mathcal{T}_i(j)$ with initial values $g_i^{(d)}$. Notice that the basic Tetranacci polynomials $\mathcal{T}_i(j)$ are the same for all  $\vec{\psi}_E^{(d)}$. The boundary condition demands $\xi^{(d)}_{-1}  = \xi^{(d)}_0 = 0$ granting $\xi^{(d)}_j = g_{-2}^{(d)}\,\mathcal{T}_{-2}(j)+ g_{1}^{(d)}\,\mathcal{T}_{1}(j)$. Since only two initial values remain, the degeneracy is at best twofold as we shall see. Due to the linearity of the eigenvector equation and the Tetranacci recursion formula, linear combinations of eigenvectors correspond to one of initial values. Without loss of generality, we set $g_{-2}^{(1)} = g_{1}^{(2)} = 0$. Then, we observe that $\vec{\psi}_E^{(3)},\ldots,\,\vec{\psi}_E^{(D)}$ are composed of $\vec{\psi}_E^{(1)}$, $\vec{\psi}_E^{(2)}$, i.e. $D=2$. Next, we derive the parameter constraints.
	
	The full boundary condition was not yet imposed on $\vec{\psi}_E^{(1)}$, $\vec{\psi}_E^{(2)}$. Demanding $\xi_{N+1}^{(1,2)} = \xi_{N+2}^{(1,2)} = 0$, yields the four constraints $\mathcal{T}_{1}(N+1)=\mathcal{T}_{1}(N+2)  = \mathcal{T}_{-2}(N+1)=\mathcal{T}_{-2}(N+2) = 0$ due to $g_{1}^{(1)}\neq 0$, $g_{-2}^{(2)}\neq 0$. Exploiting \eref{equation: T1j in terms of T-2}, i.e. $\mathcal{T}_{1}(j)\,=\,-\mathcal{T}_{-2}(j+1)$ grants $\mathcal{T}_{-2}(l) = 0$ at $l = N+1,\,N+2,\,N+3$. In physics, we trust in Bloch's theorem, i.e. only the case $S_1\neq S_2$ can be relevant. We verify this assumption later by counting all crossings in order to ensure that none is missing. From \eref{equation: Binet form for varphi}, we have $\varphi_{1,2}(N+2) =\sin[k_{1,2}d(N+2)] = 0$. Seemingly, one can choose ($n_{1,2} = 1,\ldots,\,N+1$)
	\begin{eqnarray}\label{eq: crossings without selection rules}
		k_{1,2}d = \frac{n_{1,2}\pi}{N+2}
	\end{eqnarray}
	independently. However, any linear combination of $\vec{\psi}_E^{(1)}$, $\vec{\psi}_E^{(2)}$ has to satisfy the boundary conditions. Hence, \eref{equation: quantization contraint nnn chain} has to be satisfied as well, imposing selection rules on $k_{1,2}$. In terms of $n_{\rm max} = (N+2)/2$ ($n_{\rm max} = (N+1)/2$) for even (odd) $N$, we arrive at ($n = 2,\ldots,\,n_{\rm max},\, l = 1,\,\ldots,\, n-1$)
	\begin{eqnarray}\label{equation: crossing conditions}
		\left(k_+d,\,k_-d \right) &= \left(\frac{n\pi}{N+2},\,\frac{l\pi}{N+2}\right).
	\end{eqnarray} 
	The ratio for $t_1/t_2$ is set by \eref{equation: equal energy constraint} upon inserting $k_\pm$. Then, the eigenvalue follows from \eref{equation: nnn chain dispersion relation}. Notice that \eref{equation: crossing conditions} corresponds to values of $\eta\ge0$ ($t_1/t_2<0$). For $\eta\le0$, set $k_+d \rightarrow \pi-k_+d$ while $k_-d$ is kept fixed. Avoiding double counting at $\eta = 0$ for even $N$, the total number of line crossings is $N^2/4$ (($N^2-1)/4$) for $N$ even (odd) in agreement with the numerical investigation. Henceforth, the earlier restriction on $S_1\neq S_2$ was correct indeed. Next, we turn to the eigenvectors. 
	\subsection{Eigenvectors and spatial inversion symmetry}
	\label{subsection: inversion symmetry}
	Although we "determine" all eigenvectors in the following, the lack of non-linear identities (if existent) for generic $\xi_j$ or at least for $\mathcal{T}_i(j)$ is a current limitation of the approach and the solution for $\vec{\psi}_E$ appears rather unsatisfying. A similar issue may potentially arise in quantum transport depending specifically on the investigated model (cf. \ref{appendix: quatum transport}).
	
	When an eigenvalue $E$ is chosen, the eigenvectors are set by the initial values $g_{-2},\ldots,\,g_{1}$. Imposing the open boundary conditions on \eref{equation: final closed form for tetranaccis} gives $\xi_j = g_{-2}\,\mathcal{T}_{-2}(j)+ g_{1}\,\mathcal{T}_{1}(j)$ for non degenerate eigenvalues. Otherwise, we have $\xi^{(d)}_j = g_{-2}^{(d)}\,\mathcal{T}_{-2}(j)+g_{1}^{(d)}\,\mathcal{T}_{1}(j)$, $d = 1,2$ where $g_{-2}^{(d)}$, $g_{1}^{(d)}$ can be chosen independently. Therefore, we have
	\begin{eqnarray}
		\xi^{(1)}_j = g_{-2}^{(1)}\,\mathcal{T}_{-2}(j),\\
		\xi^{(2)}_j = -g_{1}^{(2)}\,\mathcal{T}_{-2}(j+1)
	\end{eqnarray}
	exploiting \eref{equation: T1j in terms of T-2}. Here, $g_{-2}^{(1)}$, $ g_{1}^{(2)}$ adopt the role of normalization constants.
	\begin{figure}[t] \centering
		\includegraphics[width = 0.6 \textwidth]{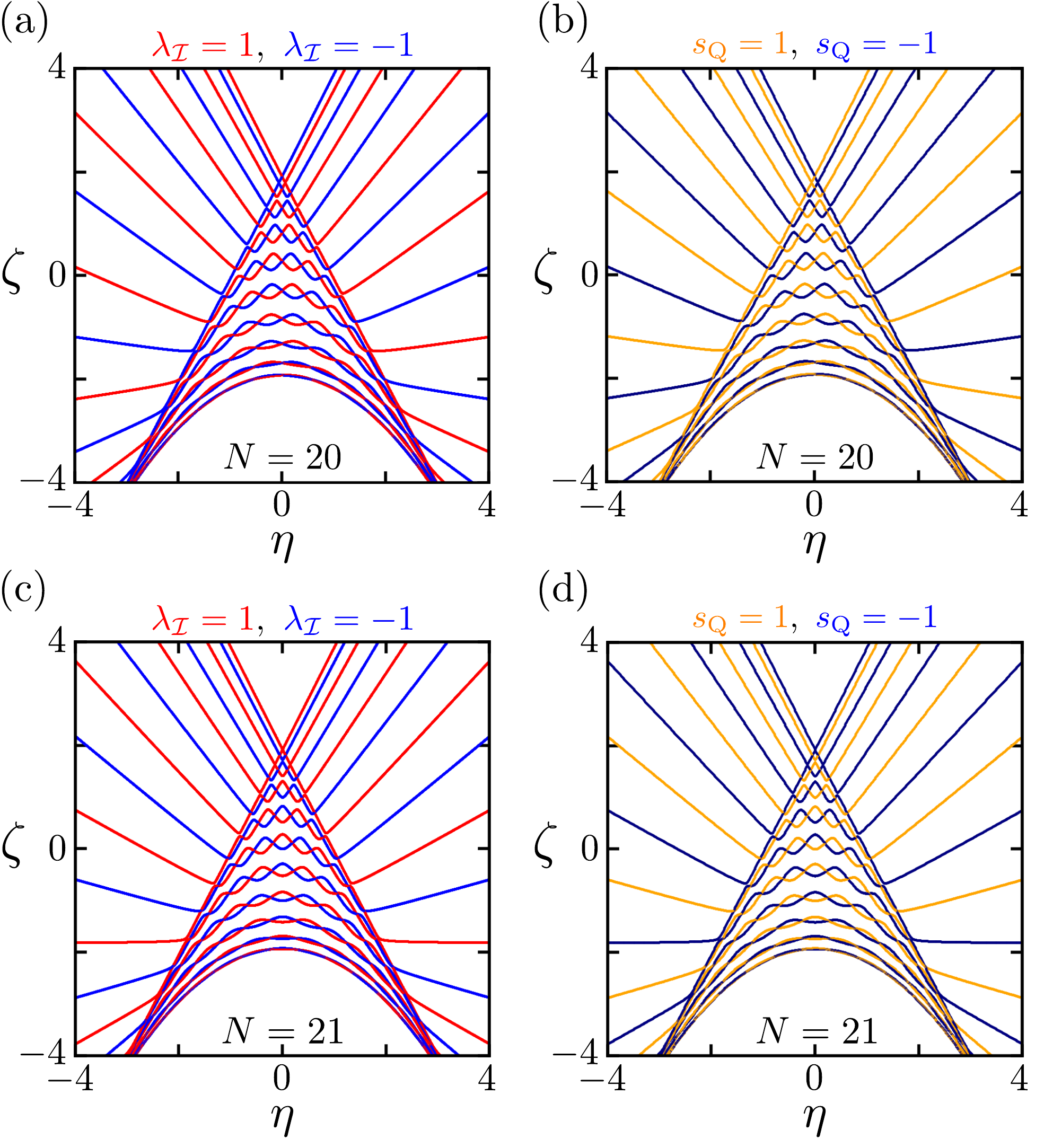}\vspace{-0.2cm}
		\caption{Relation between inversion symmetry $\lambda_{\mathcal{I}}= \pm 1$ and branch $s_{\rm  Q } = \pm 1$ in (a), (b) ((c), (d)) for even (odd) $N$. Numerically, we conclude $\lambda_{\mathcal{I}}\,s_{\rm  Q } = -1$ always, i.e. for all eigenvalues and independently of $N$.}
		\label{Fig3}
	\end{figure}
	Contrary for the non degenerate case, $\xi_{N+1} = 0$ imposes that $g_{1} = -g_{-2} \mathcal{T}_{-2}(N+1)/\mathcal{T}_{-2}(N+2)$ using \eref{equation: T1j in terms of T-2}. Both $\mathcal{T}_{-2}(N+1)$, $\mathcal{T}_{-2}(N+2)$ are finite in the non-degenerate case. Replacing $g_1$, yields
	\begin{eqnarray}\label{eq: eigenvector non degenerate case}
		\xi_j = \frac{g_{-2}}{\mathcal{T}_{-2}(N+2)}~\left[\mathcal{T}_{-2}(j)\,\mathcal{T}_{-2}(N+2)-\mathcal{T}_{-2}(N+1)\,\mathcal{T}_{-2}(j+1)\right].
	\end{eqnarray} 
	satisfying the open boundary conditions by construction. Notice though that $\xi_{N+2} = 0$ yields the quantization condition \eref{equation: quantization contraint nnn chain}. Unfortunately, further progress beyond \eref{eq: eigenvector non degenerate case} is denied due to our lack of non-linear identities for $\mathcal{T}_{-2}(j)$.
		
	The Hamiltonian \eref{equation: nnn chain, lattice Hamiltonian} possess time reversal $\tau$ and spatial inversion symmetry $\mathcal{I}$. Both imply that wavevectors appear in $\pm k$ pairs. Yet, the attempt of exploiting those fails. Explicitly, we have $\tau = \mathds{1}_N \hat{\mathcal{K}}$ ($\hat{\mathcal{K}}$ denotes the operator of complex conjugation) for spinless electrons \cite{Bernevig-2013} and
	\begin{eqnarray}
		\mathcal{I} = \left[\matrix{%
			&&&&1\cr
			&&&1\cr
			&&\iddots\cr	
			&1\cr	
			1 
		}ß\right]_{N\times N}.
	\end{eqnarray}
	Time reversal symmetry has little impact since all quantities in Eq. \eref{eq: eigenvector non degenerate case} are real and $g_{-2}$ can be chosen freely. The inversion symmetry is self-inverse $\mathcal{I}^2 = \mathds{1}_N$ with eigenvalues $\lambda_{\mathcal{I}} = \pm 1$. Since $\mathcal{I}\,\mathcal{H}\,\mathcal{I} = \mathcal{H}$ is true, the eigenvectors $\vec{\psi}_E$ fall into two categories $\mathcal{I}\,\vec{\psi}_E = \lambda_{\mathcal{I}} \vec{\psi}_E$. Even (odd) eigenvectors $\lambda_{\mathcal{I}} = +1$ ($\lambda_{\mathcal{I}} = -1$) obey $\xi_{N+1-j} = \lambda_{\mathcal{I}}\, \xi_j$. 
	
	However, this does not solve the basic problem as $\lambda_{\mathcal{I}}$ is imprinted into the quantization constraint. In fact, \eref{equation: quantization contraint nnn chain} has two branches $f(k_+) = s_{\rm Q} f(k_-)$ ($f(k) = \sin[k(N+2)]/\sin(k)$) with $s_{\rm Q} = \pm 1$. Numerically, we find $s_{\rm Q}\lambda_{\mathcal{I}} = -1$ as shown in Figure \ref{Fig3}. 
	
	The numerical investigation supports the criterion of degenerate energies. Figure \ref{Fig3} illustrates that line crossings occur between eigenvectors of opposite spatial inversion symmetry. Henceforth, the degenerate case demands $f(k_+) = f(k_-)$, $f(k_+) = - f(k_-)$ simultaneously, i.e. $f(k_+) = f(k_-) = 0$ is the only solution. Indeed, both sides of \eref{equation: quantization contraint nnn chain} vanish independently upon inserting \eref{equation: crossing conditions}.
	\subsection{The Tetranacci arrow and complex wavevectors}
	\label{subsection: Tetranacci arrow}
	The presence of only $\mathcal{I}$, $\tau$ implies that \eref{equation: nnn chain, lattice Hamiltonian} is of class AI, i.e. topological trivial according to the classification scheme of Altand and Zirnbauer \cite{Altland1997}. Thus, we neither expect the existence of complex wavevectors nor of topologically protected edgestates. However, complex values of $k_{1,2}$ are necessary in order to account properly for $t_2\rightarrow 0$. The spectrum effectively separates into two entities associated with only real wavevectors or alternatively one complex and one real wavevector. In Figure \ref{Fig4} (a),  the former is depicted in red. We refer to this region as the Tetranacci arrow.
	\begin{figure}[t] \centering
		\includegraphics[width = 0.9 \textwidth]{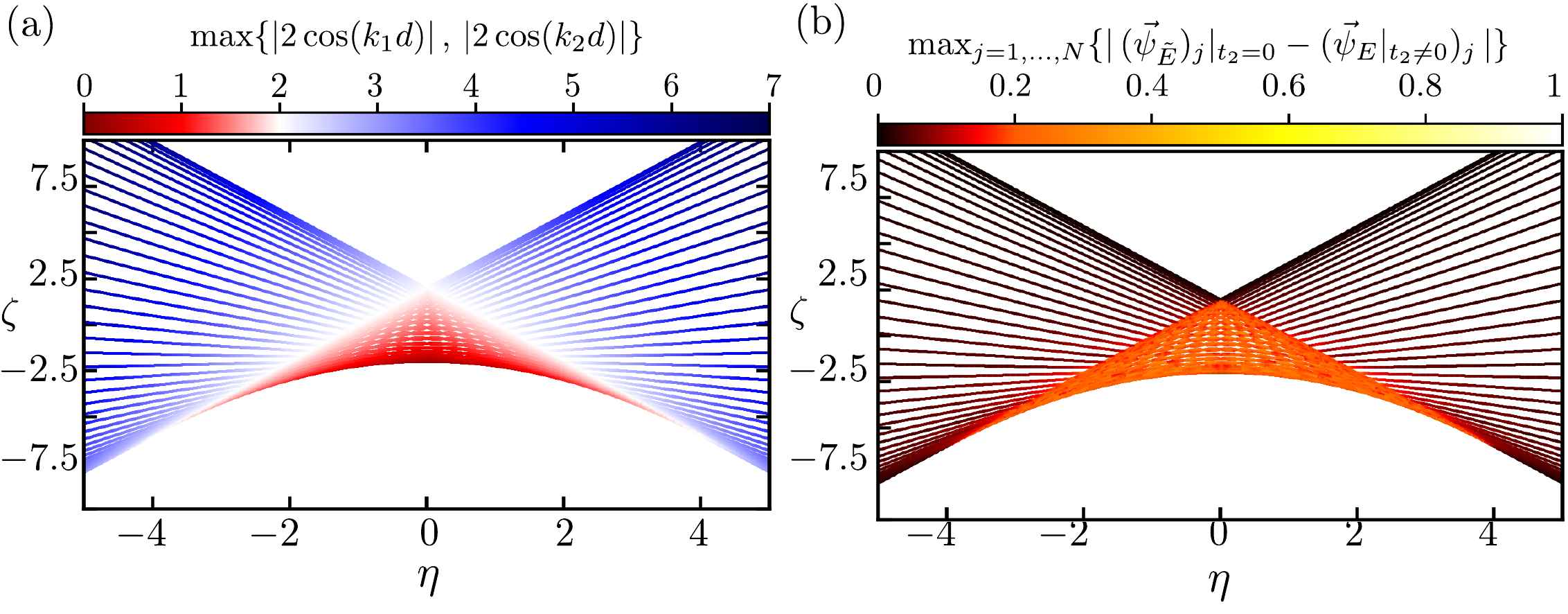}\vspace{-0.2cm}
		\caption{Numerical spectrum of \eref{equation: nnn chain matrix} for $N=40$. (a) Real values of both $k_{1,2}$ are limited to the Tetranacci arrow in red. Outside and shown in blue, one wavevector becomes complex. (b) Maximal deviation of eigenvector entries for $t_2$ and $t_2\neq 0$. The complex solution for $k$ enables the proper limit on the spectrum for $\eta\rightarrow \infty$ ($t_2\rightarrow 0$). Details are stated in the main text.}
		\label{Fig4}
	\end{figure}

	Naturally, we expect a continuous transition between $\vert t_2\vert \gg \vert  t_1\vert $ and $\vert t_2\vert \ll \vert t_1\vert $. At $t_2=0$ the spectrum of \eref{equation: nnn chain matrix} is $E=-\mu-2t_1\cos(kd)$ with $kd = n\pi/(N+1)$, $n = 1,\ldots,\,N$ \cite{Kouachi2008, Kouachi2006, Yueh2005, Gover1994, Shin1997} setting the tendency of the blue lines in Figure \ref{Fig4} (a). Yet, both $t_{1,2}$ influence $k_{1,2}$, particularly close to the arrow's boundary. Here, \eref{equation: equal energy constraint} suggests complex solutions of $k_{1,2}$ being apparent for $\vert t_1\vert> 4 \vert t_2 \vert$. In Figure \ref{Fig4} (a), the red (blue) color indicates two real (one complex and one real) wavevector. 
	
	 Analytically, the boundary conditions \eref{equation: nnc. bound1} - \eref{equation: nnc. bound4} explain the behavior. While $t_{1,2}\neq 0$ corresponds to four constraints and two left/ right mover contributions to $\xi_j$, the situation of $t_2 = 0$ implies $\xi_{0} = \xi_{N+1}=0$ and $\xi_j = A \exp(\rmi kd j)+ B \exp(-\rmi kd j)$. In order to satisfy the extreme case of $t_2 \rightarrow 0$, one wavevector becomes complex. With increasing imaginary part, the complex wavevectors contribution to $\xi_j$ diminishes as shown in Fig. \ref{Fig4} (b). For the numerical data, we diagonalized \eref{equation: nnn chain matrix} at $t_2\neq 0$, $t_2 = 0$, ordering eigenvectors from lowest to largest eigenvalues. The color indicates the largest absolute deviation of the corresponding normalized eigenvector entries. Of course, the comparison is fruitless within the Tetranacci arrow where $t_2$ dominates. A closer numerical investigation reveals that the difference diminishes on exponential scales (as expected from \eref{equation: tetranacci plane wave form}) and does not change abruptly to zero outside the arrow. Additionally, non of the eigenvectors of \eref{equation: nnn chain matrix} is an edgestate in agreement with the bulk edge correspondence \cite{Aguado-2017, Mong-2011}. 
	 
	 Despite the complex wavevector outside the Tetranacci arrow, all eigenvalues lie correctly within the frame set by the bulk dispersion relation and the first Brillouin zone. Recalling $S_{1,2}^2-\eta S_{1,2}-\zeta -2 = 0$ from Lemma \ref{lemma: exponential form for tetranaccis} and $S_{1,2} = 2\cos(k_{1,2}d)$ provides a direct link to Figure \ref{Fig2}. Due to the  $2\pi$ periodicity, the (real part of) $k_{1,2}$ lies between $-\pi/d$ and $\pi/d$. The two extremes provide the boundaries $\zeta = 2\mp 2\eta$ of the blue double fan from Figure \ref{Fig4} (a). In $\zeta-\eta$ space, both lines cross in the upper tip of the arrow at $(\eta,\,\zeta ) = (0,\,2)$. The two flanks follow $\zeta = 2\mp 2\eta$ when $\zeta \le 2$. For the lower boundary of the Tetranacci arrow, we notice that $S_{1,2}^2-\eta S_{1,2}-\zeta -2 = 0$ yields smaller values of $\zeta$ when $\vert S_{1,2}\vert <2$. The lowest parameterized curve follows from $\partial \zeta/\partial S_{1,2} = 0$ for fixed $\eta$. For $S_{1,2} = \eta/2$, we have $\zeta = -2-\eta^2/4$ applicable only for $\vert \eta \vert \le 4$ since $k_{1,2}$ are found real and reside within the first Brillouin zone. For both finite number of atoms $N$ and open boundary conditions, $\zeta \neq -2-\eta^2/4$ as otherwise $S_1 = S_2$ contradicts Bloch's theorem by \eref{equation: definition S12}.	

	\section{Engineering effective next nearest neighbor coupling}
	\label{section: Engineering effective next nearest neighbor coupling}
	Unfortunately the most interesting part of the model, i.e. the Tetranacci arrow, is hardly accessible in experiments. However, the basic ingredients of nearest neighbor processes and onsite degrees of freedom allow the engineering of effective (next) nearest neighbor bonds $\tilde{t}_1$ ($\tilde{t}_2$). Importantly, the ratio
	$\tilde{t}_1/\tilde{t}_2$ is easy to manipulate and tuning it yields arbitrarily large or small values. For the proof of principle, we consider the $X-Y$ chain in transverse magnetic field and the Kitaev chain.
	\begin{figure}[t] \centering
		\includegraphics[width = 0.6 \textwidth]{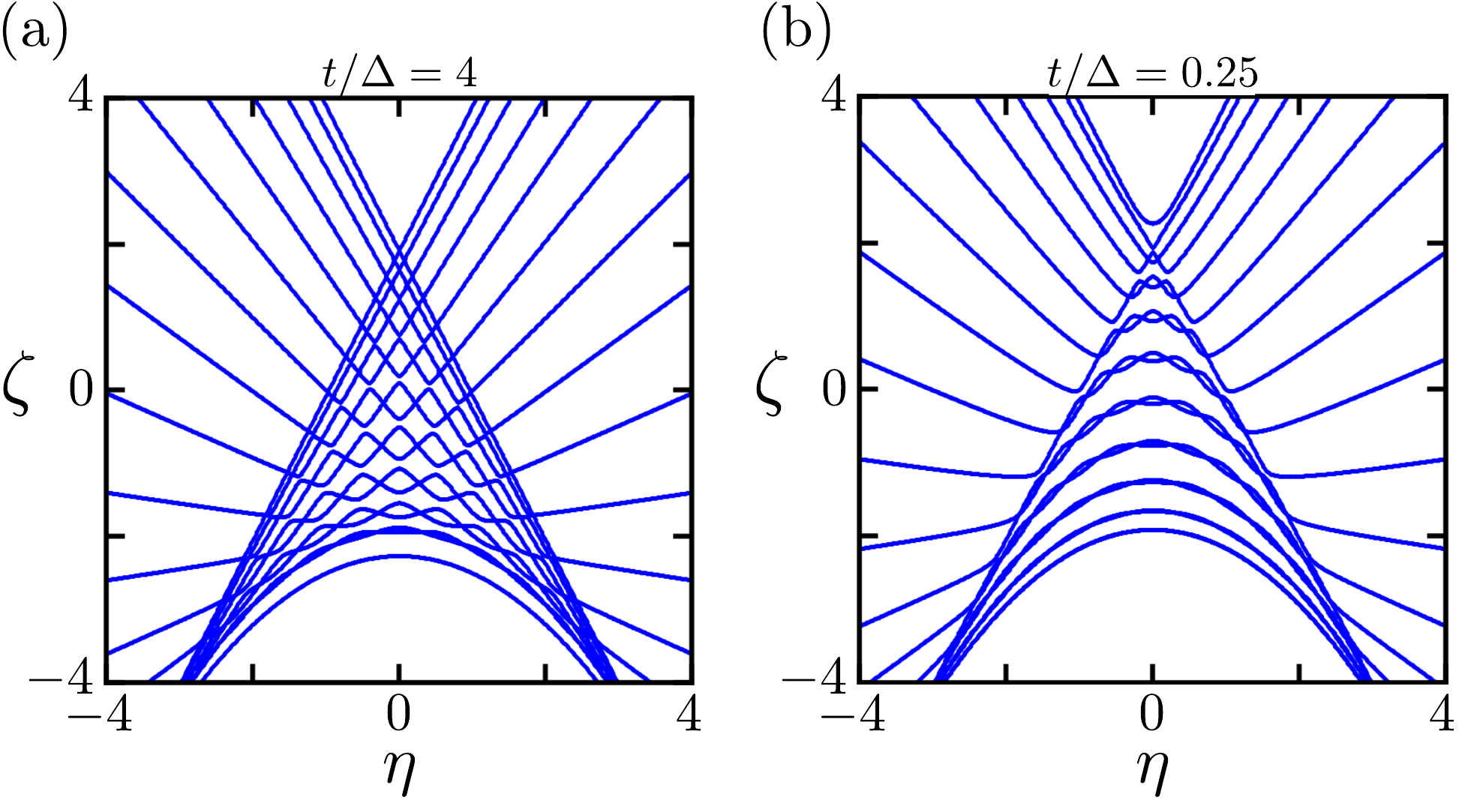}\vspace{-0.4cm}
		\caption{Numerical spectrum of the Kitaev chain in terms of $\zeta$, $\eta$ for $N=20$ and $t/\Delta = 4$ ($t/\Delta = 0.25$) in (a) ((b)). The spectrum differs for $\vert t/\Delta\vert>1$ and $\vert t/\Delta\vert<1$ associated to different limiting cases for $-\tilde{t}_1/\tilde{t}_2\rightarrow 0$. The Majorana fermions
		correspond to the lowest parabolic-like curve around $\eta = 0$ \cite{Leumer2020}.}
		\label{Fig5}
	\end{figure}

	The former consists of $N$ spins placed along a $1d$ axis and neighbors interact via their $x$ ($y$) components $S_j^\mathrm{x}$ ($ S_j^\mathrm{y}$) \cite{Lieb-1961}. Adding a transverse magnetic field $h$, the model reads \cite{Loginov1997}
	\begin{eqnarray}\label{equation: XY chain, real space, spin operators}
		\hat{H}_{\mathrm{XY}} = -h\sum\limits_{j=1}^N S_j^2 -\sum\limits_{j=1}^{N-1} \left(J_\mathrm{x} S_j^\mathrm{x} S_{j+1}^\mathrm{x}+ J_\mathrm{y} S_j^\mathrm{y} S_{j+1}^\mathrm{y}\right)
	\end{eqnarray}
	where $J_\mathrm{x}$, $J_\mathrm{y}$ abbreviate interaction constants. The diagonalization of \eref{equation: XY chain, real space, spin operators} was undertaken in \cite{Loginov1997} (\cite{Lieb-1961}) for $h\neq 0$ ($h=0$) using the Jordan-Wigner transformation which replaces spin operators by spin-less fermionic operators $c_j$, $c_j^\dagger$. The spin chain maps on the Kitaev chain as pointed out by Zvyagin \cite{Zvyagin-2015}. Therefore, we start directly from the latter. Spinless electrons experiencing hopping $t$ and $p$-wave superconductivity between nearest neighboring atoms, setting the Kitaev chain as \cite{Kitaev-2001}
	\begin{eqnarray}
		\label{equation: Kitaev chain, real space}
		\hat{H}_\mathrm{KC}&=-\mu \sum\limits_{j=1}^N\left(c_j^\dagger 	c_j-\frac{1}{2}\right)
		-t	\sum\limits_{j=1}^{N-1}\left(c_{j+1}^\dagger c_j+c_j^\dagger c_{j+1}\right)\nonumber\\&\quad  + \Delta\sum\limits_{j=1}^{N-1}\left(c_{j+1}^\dagger 	c_{j}^\dagger + c_{j+1} c_{j}\right).
	\end{eqnarray}
	We consider $t$, $\Delta$, $\mu \in \mathbb{R}$ and $N$ atoms. Independently of the $X-Y$-chain, this model has attracted some attention in the past as it is the archetypal model for topological superconductors. Although $p$-wave superconductivity is itself rare in nature, several promising platforms were proposed in the past decade in order to engineer the desired odd parity superconductivity \cite{Lutchyn-2010, Oreg-2010, Alicea-2012, Izumida-2017}. However, the hunt for Majorana fermions/ Majorana zero modes is not our motivation to subsequently introduce Majorana operators \cite{Kitaev-2001, Aguado-2017}
	\begin{eqnarray}
		\left(\matrix {\gamma^{\textcolor{blue}{A}}_j\cr
		\gamma^{\textcolor{orange}{B}}_j} \right) =\frac{1}{\sqrt{2}}\left(\matrix {1&1\cr
		-\mathrm{i} &\mathrm{i}} \right)	\left(\matrix{ c_j\cr
			c_j^\dagger} \right),
	\end{eqnarray}
	rather the aim for technical simplicity. The advantage is that Majorana operators  treat $t$ and $\Delta$ equally since they own particle and hole properties simultaneously: $(\gamma^{\textcolor{blue}{A}}_j)^\dagger = \gamma^{\textcolor{blue}{A}}_j$, $(\gamma^{\textcolor{orange}{B}}_j)^\dagger = \gamma^{\textcolor{orange}{B}}_j$. Then, \eref{equation: Kitaev chain, real space} becomes \cite{Kitaev-2001, Aguado-2017}
	\begin{eqnarray}\label{eq: Kitaev chain, Majorana operators}
		\hat{H}_\mathrm{KC} = -\mathrm{i}\mu \sum\limits_{j=1}^N\,\gamma^{\textcolor{blue}{A}}_j\,\gamma^{\textcolor{orange}{B}}_j + \mathrm{i}(\Delta -t) \sum\limits_{j=1}^{N-1}\,\gamma^{\textcolor{blue}{A}}_j\,\gamma^{\textcolor{orange}{B}}_{j+1}+ \mathrm{i}(\Delta +t) \sum\limits_{j=1}^{N-1}\,\gamma^{\textcolor{orange}{B}}_j\,\gamma^{\textcolor{blue}{A}}_{j+1}
	\end{eqnarray}
	Aiming on the spectrum of \eref{eq: Kitaev chain, Majorana operators} suggest the introduction of Majorana sublattices $\hat{\Psi}_{\alpha} = \left(\gamma^{\alpha}_{1},\,\ldots,\,\gamma^{\alpha}_{N}\right)^\mathrm{T}$, $\alpha = \textcolor{blue}{A},\,\textcolor{orange}{B}$. In this basis, the eigenvectors entries are revealed as symmetric Tetranacci polynomials with coefficients
	$\zeta = (E^2-\mu^2-2t^2-2\Delta^2)/(t^2-\Delta^2)$, $\eta = -2t\mu/(t^2-\Delta^2)$ \cite{Leumer2020, Leumerthesis} and $E$ as eigenvalue. Comparing $\eta = -\tilde{t}_1/\tilde{t}_2$ with the lattice model from \sref{section: lattice model}, we have effective (next) nearest neighbor hopping \emph{processes} $\tilde{t}_1 = 2t\mu$ ($\tilde{t}_2 = t^2-\Delta^2$), ignoring the wrong physical dimension of $\tilde{t}_{1,2}$. The ratio (value of) $\tilde{t}_1/\tilde{t}_2$ ($\eta$) could be tuned by an electrostatic gate shifting the potential; thus, modifying the onsite energy $\mu$. From \cite{Loginov1997}, we have $\tilde{t}_1 = -2h (J_x+J_y)$, $\tilde{t}_2 =J_{\rm x}J_{\rm y}$ for the $X-Y$-chain. Henceforth, a small magnetic field in $z$ direction reflects strong effective next nearest neighbor coupling. We finish this section discussing the emergence of the Tetranacci sequence and thus the engineering of $\tilde{t}_{1,2}$.
	\begin{figure}[t] \centering
		\includegraphics[width = 0.7 \textwidth]{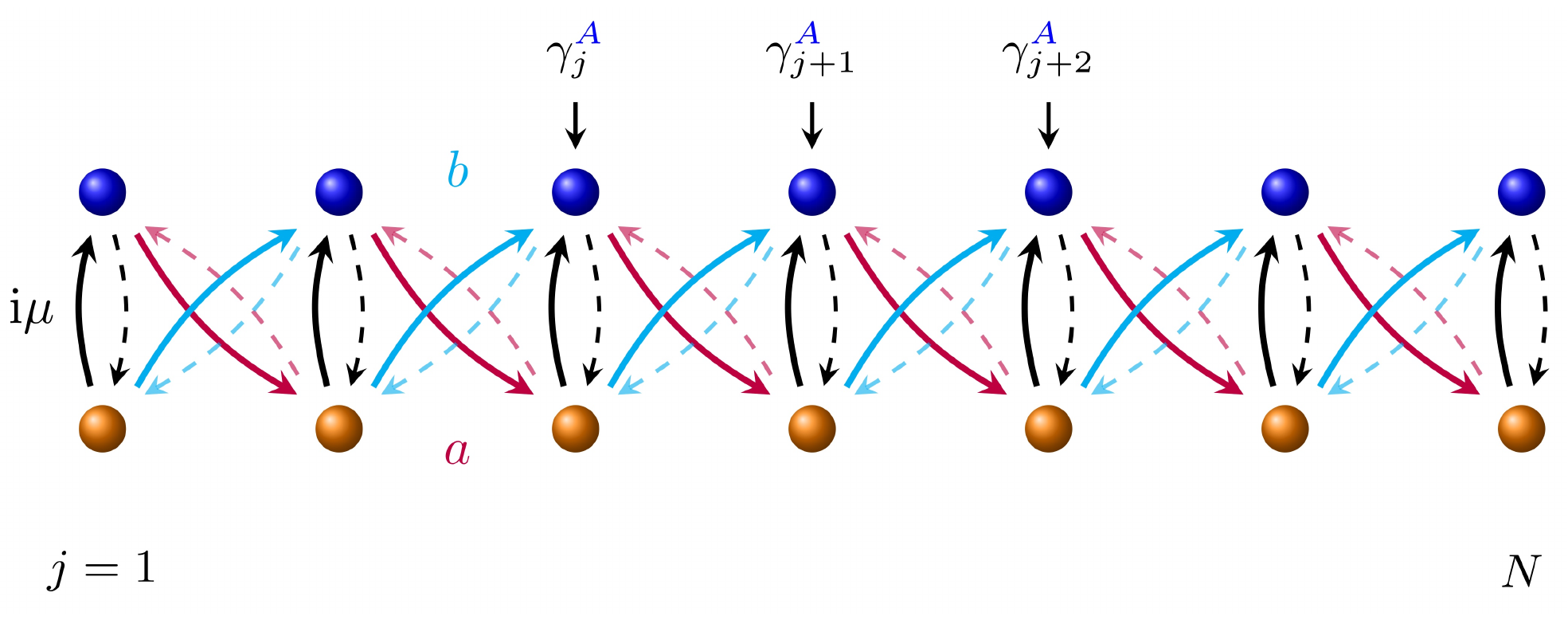}\vspace{-0.4cm}
		\caption{Kitaev chain in Majorana sublattice basis for $N=7$ atoms. Solid (dashed) arrows correspond to $\mathrm{i}\mu$ in black, $a = \mathrm{i}(\Delta -t)$ ($-a$) in purple and $b = \mathrm{i}(\Delta +t)$ (-b) in blue. Only the horizontal axis has a spatial extend with $d$ as the interatomic distance. The vertical axis displays onsite degrees of freedoms. }
		\label{Fig6}
	\end{figure}

	A first impression can be taken from Fig. \ref{Fig5}, where the numerically found spectrum forms the Tetranacci arrow. Notice though that $\tilde{t}_1/\tilde{t}_2\rightarrow \infty$ has distinct limits due the substructure of $\tilde{t}_2$. Analytically, one can avoid the cumbersome algebra from \cite{Leumer2020} and deduce the proper result pictorially from the system's sketch in Figure \ref{Fig6}. The diagonalization of \eref{eq: Kitaev chain, Majorana operators} implies a disentanglement of the sublattices; thus projecting the information of sublattice $\hat{\Psi}_{\textcolor{orange}{B}}$ on  $\hat{\Psi}_{\textcolor{blue}{A}}$ (or vice versa). This demands merely to identify all non repetitive processes from some $\gamma_j^{\textcolor{blue}{A}}$ to some $\gamma_{l}^{\textcolor{blue}{A}}$ ($j,l = 1,\ldots,N$) by following the arrows and multiplying the associated constants. Ignoring initially all boundary effects, we use $\gamma_j^{\textcolor{blue}{A}}$ as starting position as sketched in Figure \ref{Fig6}. Spatial inversion symmetry allows to focus only on forward processes, i.e. on growing $j$. For nearest neighbors $\gamma_j^{\textcolor{blue}{A}}\rightarrow\gamma_{j+1}^{\textcolor{blue}{A}}$, we have $\downarrow \textcolor{cyan}{\nearrow} \equiv -\mathrm{i}\mu\, b$ and $\textcolor{purple}{\searrow}\,\uparrow\equiv a\,\mathrm{i}\mu$. For $\gamma_j^{\textcolor{blue}{A}}\rightarrow\gamma_{j+2}^{\textcolor{blue}{A}}$, we have $\textcolor{purple}{\searrow}\textcolor{cyan}{\nearrow}  = ab$ only, without repetitive use of effective nearest neighbor bonds. The effective onsite terms $\gamma_j^{\textcolor{blue}{A}}\rightarrow\gamma_{j}^{\textcolor{blue}{A}}$ are $\downarrow \uparrow = \mu^2$, $\textcolor{purple}{\searrow} \textcolor{purple}{\nwarrow} = -a^2$ $\textcolor{cyan}{\swarrow} \textcolor{cyan}{\nearrow} = -b^2$; merely a corrective factor has to be added when the chain has finite length. Collecting all information and accounting for the backward processes yields ($j, j' = 1\,,\ldots,\,N$)
	\begin{eqnarray}\label{The Kitaev chain, eq: pictorial hhdagger}
		h_{jj'} &= \left[\mu^2-a^2(1-\delta_{jN})-b^2(1-\delta_{j1})\right]\,\delta_{j,j'}\,+\,\mathrm{i} \mu(a-b)\,\left(\delta_{j,j'+1}\,+\,\delta_{j+1,j'}\right)\nonumber\\
		&\quad +\,ab\,\left(\delta_{j,j'+2}\,+\,\delta_{j+2,j'}\right).
	\end{eqnarray}
	The spectrum of \eref{eq: Kitaev chain, Majorana operators} is found from $h\vec{v}_{\textcolor{blue}{A}} = E^2\vec{v}_{\textcolor{blue}{A}}$ with $(\vec{v}_{\textcolor{blue}{A}}, \vec{v}_{\textcolor{orange}{B}})$ as eigenvector of \eref{eq: Kitaev chain, Majorana operators}. The entries of both sublattice vectors $\vec{v}_{{\textcolor{blue}{A}}, {\textcolor{orange}{B}}}$ are symmetric Tetranacci polynomials with the earlier mentioned coefficients. Although $h$ has the same structure as $M(\alpha\neq 0, \beta\neq0 )$ from \eref{equation: Toeplitz like matrix} (and is thus not a Toeplitz matrix) $h\vec{v}_{\textcolor{blue}{A}} = E^2\vec{v}_{\textcolor{blue}{A}}$ can be treated along the lines of section \ref{section: lattice model} \cite{Leumer2020}.	
	\section{Conclusion}
	\label{section: conclusion}
	Subsequent to the definition of symmetric Tetranacci polynomials $\xi_j$ ($j \in \mathbb{Z}$), we gave a closed form expression in terms of basic Tetranacci polynomials $\mathcal{T}_i(j)$ ($i=-2,\ldots,\,1$). Due to their initial values $\mathcal{T}_i(l) = \delta_{il}$ ($l = -2,\,\ldots,1\,$), they inherit specific traits and interrelations. In turn, $\xi_j$ can be constructed from $\mathcal{T}_{-2}$ alone. We demonstrated in \eref{equation: closed form expression of T-2} the decomposition of $\mathcal{T}_{-2}(j)$, and thus generic $\xi_j$, in terms of the Fibonacci/ Tetranacci polynomials $\varphi_{1,2}(j)$. The sinusoidal representation of $\varphi_{1,2}(j)$ enables a standing wave form of $\xi_j$ in \eref{equation: final closed form for xi_j for S1 not S2}; thus reflecting already their potential in condensed matter physics when appropriate prerequisites are met. We discussed that Tetranacci polynomials generally allow for complex wavevectors $k_{1,2}$ independently of a physical model's topological classification. Contrary to previous works (cf. \cite{Leumer2020, Leumer2021}), we both generalized and simplified the presented results beyond $S_1\neq S_2$. Based on the rigorous mathematical approach, the shown results are generic and generalize beyond a concrete physical system; thus, extending earlier limitations. 	
	
	Besides the theoretical treatment of Tetranacci polynomials, we have shown their appearance in three physical systems: The $X-Y$ model in transverse magnetic field, the Kitaev chain and in atomic tight-binding systems owning nearest and next nearest neighbor hopping. For the latter, we explicitly demonstrated that Tetranacci polynomials $\xi_j$ set the wave function at an atomic position $j$. Solving the associated eigenvector equation, we derived a transcendental quantization condition for the wavevectors $k_{1,2}$, cf. \eref{equation: quantization contraint nnn chain}. Typically, the solution differs from the naively expected particle in the box behavior, that is $k = n\pi/(N+1)$ ($n=1,\ldots,\,N$) for $N$ atoms. The three main reasons are, firstly that $k_{1,2}$ depend on the models parameters by means of the equal energy constraint $E(\pm k_1) =E(\pm k_2)$. Secondly, the quantization constraint captures the next nearest neighbor character of the model, i.e. $N+2$ rather than $N+1$ appears. Lastly, $k_{1,2}$ are generally complex. The momentum quantization condition yields a degenerate spectrum at well defined parameter values. We deduced and stated the conditions. Spatial inversion symmetry protect the degenerate eigenvalues and we demonstrated that the spatial parity of the associated eigenstates is imprinted into the transcendental constraint  \eref{equation: quantization contraint nnn chain}. 
	
	This proofed fatal since symmetry relation fail to simplify the expression of eigenvectors beyond a merely formal solution due to unknown non-linear identities (if existent) of $\mathcal{T}_{-2}(j)$. Quantum transport (cf. \ref{appendix: quatum transport}) suffers the same fate although model specific traits may allow progress in the linear transport regime as has been demonstrated earlier for the Kitaev chain \cite{Leumer2021}.
	
	When the Tetranacci recursion is obeyed, the spectrum of quantum devices features a shape we dubbed as \emph{Tetranacci arrow}. Inside, wavevectors are real and all degenerated eigenvalues are placed within. Experimental access to the Tetranacci arrow is seemingly denied, as the next nearest neighbor hopping $t_2$ has to dominate its nearest neighbor cousin $t_1$. However, coupled onsite degrees of freedom and nearest neighbor processes allows to engineer effective (next) nearest neighbor bonding $\tilde{t}_1$ ($\tilde{t}_2$) with arbitrarily large or small ratio $\tilde{t}_1/\tilde{t}_2$. We provided a descriptive proof of principle based on the $X-Y$-model and the Kitaev chain. For the former, we argued that $\tilde{t}_1/\tilde{t}_2$ depends on the onsite energy $\mu$ and is thus  experimentally tune able by an electronic gate. For the $X-Y$ chain this ratio depends on the external magnetic field, where small field strength relate to large $\tilde{t}_1/\tilde{t}_2$.
	
	Several setups to simulate the Kitaev chain in table-top experiments using macroscopic elements were proposed and realized \cite{Liu2022, Allein2022}. Particularly, the usage of magnetic spinners to fabricate a macroscopic version of a quantum ladder provides an mechanical analog of the Kitaev chain with an adjustable onsite term \cite{Qian2023}. Besides, unpaired electrons in carbon ladder polymers exhibit ferromagnetic and antiferromagnetic interactions such that effective $\tilde{t}_{1,2}$ bonds are formed \cite{Ortiz2023}. The discussed physics of an atomic chain featuring nearest and next nearest neighbor hopping is therefore experimentally accessible and allows the on-demand facilitation of complex wavevectors. In short terms, the Tetranacci arrow is real indeed.
	\ack
	We gratefully acknowledge financial support from the Deutsche Forschungsgemeinschaft (SFB 1277 Project B04), the Elitenetzwerk Bayern (IGK Topological Insulators) and from the French National Research Agency ANR through project ANR-20-CE30-0028-01. We acknowledge interesting and fruitful discussion with M. Grifoni, M. Marganska, D. Weinmann and S. Hrdina. Further, we thank W. H{\"a}usler and M. Nieper-Wisskirchen for encouraging us to write this manuscript. Finally, we are grateful for all suggestions by the referees.
	\appendix
	\section{Degenerated roots $r_{\pm 1,2}$}
	\label{appendix: degenerated roots}
	\begin{prop} Degeneracies of $r_{\pm l}$ grant 
		\begin{eqnarray}
			r_{\pm 1}^j,\,j\,r_{\pm 1}^j,  &S_1 =S_2,\,S_1^2\neq 4,\\
			r_{+ 1}^j,\,j\,r_{+ 1}^j,\, j^2\,r_{+ 1}^j,\,j^3\,r_{+ 1}^j,\qquad&S_1 =S_2,\,S_1^2= 4
		\end{eqnarray}
		as additional solutions to \eref{equation: Tetranacci recursion formula}.
	\end{prop}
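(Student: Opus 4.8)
The plan is to exploit the multiplicity structure of the characteristic polynomial of \eref{equation: Tetranacci recursion formula} in the variable $r$. As in the proof of Lemma \ref{lemma: exponential form for tetranaccis}, inserting the power ansatz $\xi_j\propto r^j$ ($r\neq 0$) and clearing $r^{j-2}$ turns \eref{equation: Tetranacci recursion formula} into the quartic $p(r)\equiv r^4-\eta\,r^3-\zeta\,r^2-\eta\,r+1=0$, and the substitution $S=r+r^{-1}$ used there exhibits the factorisation $p(r)=(r^2-S_1 r+1)(r^2-S_2 r+1)$ with $S_{1,2}$ as in \eref{equation: definition S12} and $r_{\pm l}$ the roots of $r^2-S_l r+1$ from \eref{equation: definition rpml in the power ansatz}.

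First I would treat the degeneracy $S_1=S_2=:S$, for which $p(r)=(r^2-S r+1)^2$. If $S^2\neq 4$, the factor $r^2-S r+1$ has the two distinct roots $r_{+1}\neq r_{-1}$ (with $r_{+1}r_{-1}=1$), so that $p(r)=(r-r_{+1})^2(r-r_{-1})^2$: both $r_{+1}$ and $r_{-1}$ are double roots. If instead $S^2=4$, then $r_{+1}=r_{-1}=S/2\in\{+1,-1\}$ and $r^2-S r+1=(r-S/2)^2$, whence $p(r)=(r-S/2)^4$, a single root of multiplicity four.

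Next I would invoke the standard confluence mechanism already used in the paper: a root $\rho$ of $p$ of multiplicity $m$ yields the $m$ solutions $j^k\rho^j$, $k=0,\ldots,m-1$, of \eref{equation: Tetranacci recursion formula}. For $k=0$ this is the ansatz itself; for $k\ge 1$, substituting $j^k\rho^j$ into \eref{equation: Tetranacci recursion formula}, shifting indices and reordering in powers of $j$, the coefficient of $j^{k-i}$ is, up to a nonzero factor, the $i$-th derivative $p^{(i)}(\rho)$, which vanishes for all $i<m$ — exactly the computation carried out explicitly for the quadratic case in Proposition \ref{proposition: explicit closed form for varphi12} and Lemma \ref{lemma: jvarphi symmetric Tetranacci for S1  =S2}, here applied to the quartic $p$. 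Applying it for $S^2\neq 4$ to $\rho=r_{+1}$ and $\rho=r_{-1}$ (each of multiplicity two) produces $r_{\pm 1}^j$ together with the additional solutions $j\,r_{\pm 1}^j$; applying it for $S^2=4$ to $\rho=S/2$ (multiplicity four) produces $r_{+1}^j$ together with $j\,r_{+1}^j$, $j^2\,r_{+1}^j$, $j^3\,r_{+1}^j$, which is the claim. Finally, these four functions form a fundamental system in each case, since their Casoratian at $j=0,1,2,3$ is a (confluent) Vandermonde-type determinant, nonzero because $r_{+1}\neq r_{-1}$ in the first case and $S/2\neq 0$ in the second; hence every symmetric Tetranacci polynomial is a linear combination of them.

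The routine bookkeeping aside, the one point requiring care is the confluence identity in the third step — namely that substituting $j^k\rho^j$ into \eref{equation: Tetranacci recursion formula} and reordering in powers of $j$ really reduces to the vanishing of $p^{(i)}(\rho)$ for $i\le m-1$. Since $\rho\neq 0$ and $p$ has a zero of exact order $m$ at $\rho$ in each case above, those derivatives indeed vanish, so the reduction goes through exactly as in the lower-order verifications already spelled out in the main text; no genuinely new difficulty arises.
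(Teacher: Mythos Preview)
Your argument is correct and takes a genuinely different, more structural route than the paper's own proof. The paper proceeds by brute-force substitution: it plugs $j\,r_{\pm 1}^j$, then $j^2\,r_{+1}^j$, then $j^3\,r_{+1}^j$ one at a time into \eref{equation: Tetranacci recursion formula}, divides by $r_{\pm 1}^{j-2}$, and checks by hand that the residual terms cancel using $\eta=2S_1$, $r_{+1}r_{-1}=1$, and (in the fully degenerate case) $r_{+1}=S_1/2$ with $S_1^2=4$. You instead identify the factorisation $p(r)=(r^2-S_1r+1)(r^2-S_2r+1)$ of the characteristic quartic, read off the root multiplicities directly ($(r-r_{+1})^2(r-r_{-1})^2$ versus $(r-S/2)^4$), and appeal to the standard confluence principle for linear recurrences with constant coefficients. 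This is cleaner and scales: nothing in your argument is specific to order four, whereas the paper's computations would become increasingly tedious for higher degeneracies. One small wording issue: the coefficient of $j^{k-i}$ after reordering is not literally a nonzero multiple of $p^{(i)}(\rho)$ but rather of $\bigl[(r\,\rmd/\rmd r)^{i}p\bigr](\rho)$, which is a linear combination of $p'(\rho),\ldots,p^{(i)}(\rho)$ with nonzero top coefficient $\rho^{i}$; since all of these vanish for $i\le m-1$ at a root of multiplicity $m$, your conclusion stands unchanged. The Casoratian remark at the end is a welcome addition the paper does not supply.
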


	\begin{proof}
		We focus first on $S_1 =S_2,\,S_1^2\neq 4$, where \eref{equation: definition rpml in the power ansatz} implies $r_{\pm 1} = r_{\pm 2}$. Substituting $\xi_j \propto j\,r_{\pm 1}^j$ into \eref{equation: Tetranacci recursion formula} and dividing by $r_{\pm 1}^{j-2}\neq 0$ gives
		\begin{eqnarray}\label{appendix equation: jrj is solution}
		\fl \qquad	0 = j\left[r_{\pm 1}^4 - \zeta r_{\pm 1}^2+1-\eta \left(r_{\pm 1}^3+r_{\pm 1}\right)\right]\,+\,2\,r_{\pm 1}^4-2 - \eta\,(r_{\pm 1}^3-r_{\pm 1}).
		\end{eqnarray}
		Here, the first bracket vanishes since $r_{\pm 1}$ satisfies \eref{equation: Tetranacci recursion formula}. Due to \eref{equation: definition S12}, $S_1 = S_2$ implies $\eta = 2S_1 $. Using $S_1 = r_{+1}+r_{-1}$, one has
		\begin{eqnarray}\label{appendix equation: jrj is solution2}
		\fl \qquad	0 = 2\,r_{\pm 1}^4-2 - \eta\,(r_{\pm 1}^3-r_{\pm 1})  = 2\,r_{\pm 1}^4-2 - 2 (r_{+1}+r_{-1})\,(r_{\pm 1}^3-r_{\pm 1}).
		\end{eqnarray}
		Recalling $r_{+1}r_{-1} = 1$ and \eref{appendix equation: jrj is solution} is satisfied indeed. Thus, $\xi_j\propto j\,r_{\pm 1}^j$ is a symmetric Tetranacci polynomial.
		
		For $S_1 =S_2,\,S_1^2= 4$, \eref{equation: definition rpml in the power ansatz}, states $r_{+1} = r_{-1} = r_{+2}=r_{-2}$. Due to $S_1 = S_2$, $jr_{+1}^j$ is already a solution and we turn directly to $j^2r_{1}^j$. From \eref{equation: Tetranacci recursion formula}, we find		
		\begin{eqnarray}
		\fl \qquad	0 &= j^2\left[r_{\pm 1}^4 - \zeta r_{\pm 1}^2+1-\eta \left(r_{\pm 1}^3+r_{\pm 1}\right)\right]\,+\,2j\left[2\,r_{\pm 1}^4-2 - \eta\,(r_{\pm 1}^3-r_{\pm 1})\right],\nonumber\\
		\fl \qquad	&\qquad +4\,r_{\pm 1}^4+4 - \eta\,(r_{\pm 1}^3+r_{\pm 1}).
		\end{eqnarray}
		Here, the first line vanishes. Since \eref{equation: definition rpml in the power ansatz} implies  $r_{+1} = S_1/2$ at $S_1 = S_2$, $S_1^2=4$, we find 
		\begin{eqnarray}
		\fl \qquad	4\,r_{\pm 1}^4+4 - \eta\,(r_{\pm 1}^3+r_{\pm 1})\,=\,4 \left(\frac{S_1}{2}\right)^4-4 \frac{S_1}{2}\left(\frac{S_1}{2}\right)^3+4-S_1^2 = 0.
		\end{eqnarray}
		Thus, $j^2\,r_{+1}^j$ is a solution to \eref{equation: Tetranacci recursion formula}. Further, $j^3\,r_{+1}^j$ also satisfies \eref{equation: Tetranacci recursion formula}. Similar as before, terms associated to $j^2, j,j^0$ drop. We arrive at
		\begin{eqnarray}
		\fl \qquad	0 \,=\,8r_{+1}^4-8-\eta\, r_{+1}^3+\eta\, r_{+1}
			&= 8 \left(\frac{S_1}{2}\right)^4 - 8 -\eta\,r_{+1}\left(r_{+1}^2-1\right)\nonumber\\
			&= 8 \left(\frac{S_1}{2}\right)^4 - 8 -\eta\,r_{+1}\left(\frac{S_1^2}{4}-1\right) \equiv 0,
		\end{eqnarray}
		due to $r_{+1} = S_1/2$, $S_1^2 = 4$. Hence, $j^3\,r_{+1}^{j}$ satisfies \eref{equation: Tetranacci recursion formula}. \hfill $\square$
	\end{proof}
	
	\section{$\mathcal{T}_{-1}(j)$, $\mathcal{T}_{0}(j)$, $\mathcal{T}_{1}(j)$ for degenerate roots}
	\label{appendix: Formulae of T-1,.., T_1 for degenerate roots}
	For $S_1 = S_2\neq \pm 2$, we have
	\begin{eqnarray}
	\fl \qquad	\mathcal{T}_{-1}(j) &= \frac{3j \varphi_1(j+2)- (j+2)(S_1^2-1)\varphi_{1}(j)}{S_1^2-4},\\
	\fl \qquad	\mathcal{T}_{0}(j) &= \frac{2(S_1^2-1)(j+2)\,\varphi_1(j+1)- 3(j+1)S_1\,\varphi_1(j+2)}{S_1^2-4},\\
	\fl \qquad	\mathcal{T}_{1}(j) &= \frac{j\varphi_1(j+2)-(j+2)\varphi_{1}(j)}{S_1^2-4},
	\end{eqnarray}
	while $S_1 = S_2 = \pm2$ gives
	\begin{eqnarray}
	\fl \qquad	\mathcal{T}_{-1}(j) &= S_1\frac{(2-j)j\,\varphi_1(j-1) + 2S_1 (j^2-1)\,\varphi_1(j)}{12},\\
	\fl \qquad	\mathcal{T}_{0}(j) &= S_1\frac{(3+j)(1+j)\,\varphi_1(j+2) -2S_1 (j+2)j\,\varphi_1(j+1)}{12},\\
	\fl \qquad	\mathcal{T}_{1}(j) &= S_1\frac{(2+j)j\, \varphi_1(j+1)}{12}
	\end{eqnarray}
	for generic integer $j$.
	\begin{proof}
		The displayed formulae follow directly by substituting \eref{equation: closed form expression of T-2} into the relations from Lemmata \ref{lemma: inversion relation of Tij}, \ref{lemma: further interconnection of the basic Tetranacci polynomials} and exploiting the properties of $\varphi_{1,2}(j)$ drawn in Proposition \ref{proposition: varphi is odd in j} and Theorem \ref{theorem: hidden Fibonacci polynomials}. \hfill $\square$
	\end{proof}
	\section{Tetranacci polynomials in quantum transport}
	\label{appendix: quatum transport}
	When applicable, Tetranacci polynomials own the potential to provide exact analytic results for the linear and non-linear transport of quantum and mesoscopic devices. However, the unknown non-linear identities between Tetranacci polynomials (if existent) impose serious challenges in simplifying the retarded Green's function (see below). Captured by the explicit structure of the coefficients $\zeta$, $\eta$, model specific properties may still allow substantial progress as has been shown in case of the Kitaev chain \cite{Leumer2021, Leumerthesis}. Unfortunately, this seems not to be true for the atomic chain discussed in \sref{section: lattice model}. However, the system actually allows for a short and evident connection between Tetranacci polynomials, the retarded Green's function $G^r$ and the transmission probability $\mathcal{T}(E)$. We provide a \underline{sketch} on the basic strategy.
	
	We consider two normal conducting, non interacting contacts ($\alpha = L, R$)
	\begin{eqnarray}
		\hat{H}_\alpha = \sum\limits_k \epsilon_{k\alpha}\,c_{k\alpha}^\dagger c_{k\alpha}
	\end{eqnarray}
	sandwiching the atomic chain $\hat{H}$ stated in \eref{equation: nnn chain, lattice Hamiltonian}. Here, $\epsilon_{k\alpha}^{(\dagger)}$ removes (adds) a spinless electron from (to) contact $\alpha$ and $\epsilon_{k\alpha}$ denotes the electron's energy. The tunneling Hamiltonian
	\begin{eqnarray}
		\hat{H}_{\rm T} = \sum\limits_k \left(t_L(k) \,d_1^\dagger c_{kL}\, +\, t_R(k)\, d_N^\dagger c_{kR}\right) \,+\,{\rm h.c.}
	\end{eqnarray}
	allows the exchange of particles only close to the chain's ends for simplicity. An applied bias $eV$ between the two contacts initializes an electric current $I(t) = -e \,d\langle \hat{N}_\alpha \rangle/(dt)$, $e>0$. The Hamiltonian $\hat{H}_{\rm tot} = \hat{H}+ \hat{H}_{\rm T} + \hat{H}_{L}+\hat{H}_{R}$ captures the time evolution of the particle number operator $\hat{N}_\alpha =\sum_k  c_{k\alpha}^\dagger c_{k\alpha}$. Applying the non-equilibrium Green's function technique yields straightforwardly the steady state current  \cite{Flensberg2004, Meir-1992, Ryndyk-2016, Haug-1996}
	\begin{eqnarray}\label{current equation}
		I = \frac{e}{h}\int\limits_{-\infty}^{\infty} dE~ \mathcal{T}(E)\left[f(E)-f(E+eV)\right],
	\end{eqnarray}
	after some algebra.	Here, $f(E)$ abbreviates the Fermi function and $h$ is Planck's constant. The self-energies $\Sigma^r_{L,R}$ ($i,j = 1,\,\ldots,\,N$)
	\numparts
	\begin{eqnarray}
		(\Sigma^r_L)_{i,j} &= (\Lambda_L-\mathrm{i}\gamma_L) \,\delta_{1i}\delta_{1j},\\
		(\Sigma^r_R)_{i,j} &= (\Lambda_R-\mathrm{i}\gamma_R) \,\delta_{Ni}\delta_{Nj}
	\end{eqnarray}
	\endnumparts
	account for the coupling of leads and chain. They are sparse matrices due to the choice of $\hat{H}_{\rm T}$. The quantities $\gamma_\alpha = \pi \sum_k \vert t_\alpha(k)\vert^2 \,\delta(E-\epsilon_{k\alpha})$, $\Lambda_\alpha = \mathcal{P} \sum_k \vert t_\alpha(k)\vert^2/(E-\epsilon_{k\alpha})$ are real and $\mathcal{P}$ denotes Cauchy's principle value. The transmission probability $\mathcal{T}(E) = {\rm Tr}\{\Gamma_L G^r\Gamma_R G^a\}$ is given as a trace over broadening matrices $\Gamma_\alpha = -2\, \mathrm{Im}(\Sigma^r_\alpha)$ and retarded (advanced) Green's functions $G^r$ ($G^a$). Due to $G^a = (G^r)^\dagger$, $G^r = (E\mathds{1}_N- \mathcal{H}-\Sigma_L^r-\Sigma_R^r)^{-1}$ and $\mathcal{H}$ from \eref{equation: nnn chain matrix}, $\mathcal{T}(E)$ is fully determined.
	
	Contrary to the full non-linear transport regime in \eref{current equation}, the linear conductance
	\begin{eqnarray}
		G = \lim\limits_{eV\rightarrow 0} \frac{\partial I}{\partial V} = \frac{e^2}{h} \mathcal{T}(E = 0).
	\end{eqnarray}
	relates to $\mathcal{T}(0)$ only for zero temperature. The sparsity of $\Gamma_{L,R}$ yields generally $\mathcal{T}(E) = 4 \gamma_L\gamma_R \vert G^r_{1N}\vert^2$. Instead of performing the inversion, $G^r_{1N}$ can be obtained much easier using Tetranacci poylnomials. Since the connection is more apparent for $E\neq 0$, we focus on $\mathcal{T}(E)$ instead. 
	
	Obviously, $G^r$ obeys $(E\mathds{1}_N- \mathcal{H}-\Sigma_L^r-\Sigma_R^r)G^r = \mathds{1}_N$. As we shall see, we can approach as for the eigenvector equation of $ \mathcal{H}$. However, notice that $E$ is a continuous variable and not necessarily an eigenvalue of $\mathcal{H}$. When $G^r = (\vec{v}_1,\,\ldots,\,\vec{v}_N)^\mathrm{T}$ is written in terms of column vectors $\vec{v}_i \in \mathbb{R}^{N\times 1}$, those obey ($i = 1\,\ldots,\,N$)
	\begin{eqnarray}\label{equation: column vectors of G^r}
		(E\mathds{1}_N- \mathcal{H}-\Sigma_L^r-\Sigma_R^r) \vec{v}_i = \vec{e}_i
	\end{eqnarray}
	where $\vec{e}_i$ abbreviates the $i$-th column of $\mathds{1}_N$. The structure of \eref{equation: column vectors of G^r} reminds an eigenvector equation for the matrix $\mathcal{H}+\Sigma_L^r+\Sigma_R^r$, but the inhomoginity $\vec{e}_i$ yields a unique solution for $\vec{v}_i$. Aiming on $G^r_{1N}$, sets $i= N$ in \eref{equation: column vectors of G^r}. Defining $\vec{v}_N = (\sigma_1,\ldots,\,\sigma_N)^\mathrm{T}$ grants $\sigma_{j+2} =  \zeta\, \xi_j-\xi_{j-2} + \eta\, (\xi_{j+1}+\xi_{j-1})$ with $\zeta = -(E+\mu)/t_2$, $\eta = -t_1/t_2$ for $j = 3\,\ldots,N-3$ similar to the eigenvector equation of $\mathcal{H}$. Yet, the self-energies modify the boundary conditions. Extending again the recursion to all integers $j$, but keeping $\vec{v}_N$ untouched, the constraints reduce to
	\numparts
	\begin{eqnarray}
		\label{equation: GF boundary condition 1}
		0=&\sigma_0\\
		\label{equation: GF boundary condition 2}
		0=&\sigma_{N+1},\\
		\label{equation: GF boundary condition 3}
		0=&(\mathrm{i}\gamma_L -\Lambda_L)\sigma_1 -t_2\sigma_{-1},\\
		\label{equation: GF boundary condition 4}
		1=&(\mathrm{i}\gamma_R -\Lambda_R)\sigma_N -t_2\sigma_{N+2},
	\end{eqnarray}
	\endnumparts
	for $t_2\neq 0$. Here "$1$" originates from $(\vec{v}_N)_N = 1$. Due to Corollary \ref{corollary: closed form for xij}, we search for $g_{-2}\,\ldots,\,g_{-1}$. From \eref{equation: GF boundary condition 1}-\eref{equation: GF boundary condition 4}, we find
	\begin{eqnarray}\label{solution for sigma j}
		\sigma_j = g_{-2}\mathcal{T}_{-2}(j) + g_1 \left[\mathcal{T}_1(j) + \frac{\mathrm{i}\gamma_L - \Lambda_L}{t_2}\mathcal{T}_{-1}(j)\right]
	\end{eqnarray}
	thus a $2\times 2$ matrix equation
	\begin{eqnarray}\label{boundary 2x2 system}
		\left(\matrix{%
			a & b\cr	
			c & d
		}ß\right) \left(\matrix{%
	g_{-2}\cr	
		g_1
	}ß\right) = \left(\matrix{%
0\cr	
1
}ß\right)
	\end{eqnarray}
	with coefficients
	\numparts
	\begin{eqnarray}
		a &= \mathcal{T}_{-2}(N+1),\\
		b &= \mathcal{T}_{1}(N+1) + \frac{\mathrm{i}\gamma_L- \Lambda_L}{t_2}\mathcal{T}_{-1}(N+1),\\
		c &= (\mathrm{i}\gamma_R-\Lambda_R)\mathcal{T}_{-2}(N) - t_2 \mathcal{T}_{1}(N+2),\\
		d & = (\mathrm{i}\gamma_R-\Lambda_R) \left[\mathcal{T}_{1}(N) + \frac{\mathrm{i}\gamma_L- \Lambda_L}{t_2}\, \mathcal{T}_{-1}(N)\right]\nonumber\\
		&\quad -t_2\left[\mathcal{T}_{1}(N+2) + \frac{\mathrm{i}\gamma_L- \Lambda_L}{t_2}\, \mathcal{T}_{-1}(N+2)\right].
	\end{eqnarray}
	\endnumparts
	The solution to \eref{boundary 2x2 system} is
	\begin{eqnarray}\label{coefficients solution}
		\left(\matrix{%
			g_{-2}\cr	
			g_1
		}ß\right) = \frac{1}{ab-cd} \left(\matrix{%
		 -\mathcal{T}_{1}(N+1) - \frac{\mathrm{i}\gamma_L- \Lambda_L}{t_2}\mathcal{T}_{-1}(N+1)\cr	
		\mathcal{T}_{-2}(N+1)
	}ß\right).
	\end{eqnarray}
	Hence, $G^r_{1N} \equiv \sigma_N$ contains non-linear combinations of $\mathcal{T}_i(N),\,\mathcal{T}_i(N+1)$ ($i = -2,\,\ldots, 1$) when \eref{coefficients solution} is inserted into \eref{solution for sigma j}. 
	
	For zero temperature, the linear conductance depends on $\left.G^r_{1N}\right\vert_{E=0}$ and non-linear identities can possibly be circumvented. In case of the Kitaev chain, the setting $E = 0$ turns all Tetranacci into Fibonacci polynomials \cite{Leumer2021}. However, this is a model specific property imprinted into the respective composition of $\zeta$, $\eta$ in terms of the model's parameters. Unfortunately, the nearest neighbor hopping chain misses this feature and further progress towards an useful explicit formula for $\left.G^r_{1N}\right\vert_{E=0}$, $\mathcal{T}(0)$ is prohibited.

	\section*{References}
	
\end{document}